\newcommand{\eps}{\varepsilon}
\newcommand{\MST}{\emph{MST}}
\newcommand{\light}{\emph{lightness}}
\newcommand{\diam}{\emph{diam}}
\newcommand{\Oish}{\widetilde{O}}
\newtheorem{theorem}{Theorem}[]
\newtheorem{lemma}{Lemma}[]
\newtheorem{definition}[lemma]{Definition}
\title{Lightweight Near-Additive Spanners}
\author[1]{Yuval Gitlitz}
\author[1]{Ofer Neiman} 
\author[1]{Richard Spence}
\affil[1]{Ben-Gurion University of the Negev. Emails: \texttt{gitlitz@post.bgu.ac.il}, \texttt{neimano@cs.bgu.ac.il}, \texttt{rcspence1234@gmail.com}}
\date{\today}
\begin{document}

\maketitle

\begin{abstract}

An $(\alpha,\beta)$-spanner of a weighted graph $G=(V,E)$, is a subgraph $H$ such that for every $u,v\in V$, $d_G(u,v) \le d_H(u,v)\le\alpha\cdot d_G(u,v)+\beta$. The main parameters of interest for spanners are their size (number of edges) and their lightness (the ratio between the total weight of $H$ to the weight of a minimum spanning tree).

In this paper we focus on near-additive spanners, where $\alpha=1+\eps$ for arbitrarily small $\eps>0$. 
We show the first construction of {\em light} spanners in this setting. Specifically, for any integer parameter $k\ge 1$, we obtain an $(1+\eps,O(k/\eps)^k\cdot W(\cdot,\cdot))$-spanner with lightness $\Oish(n^{1/k})$ (where $W(\cdot,\cdot)$ indicates for every pair $u, v \in V$ the heaviest edge in some shortest path between $u,v$). In addition, we can also bound the number of edges in our spanner by $O(kn^{1+3/k})$. 


\end{abstract}

\section{Introduction}
Given an undirected, edge-weighted graph $G=(V,E,w)$ with $|V|=n$, and parameters $\alpha \ge 1$ and $\beta \ge 0$, a subgraph $H=(V,E',w)$ is an \emph{$(\alpha,\beta)$-spanner} if
\begin{equation} \label{eq:spanner-inequality}
    d_G(x,y) \le d_H(x,y) \le \alpha \cdot d_G(x,y) + \beta
\end{equation}
for all vertex pairs $x,y \in V$, where $d_G(x,y)$ denotes the distance between $x$ and $y$ in $G$. Spanners and related graph constructs such as emulators (which satisfy~\eqref{eq:spanner-inequality} but need not be subgraphs of the given graph) have gained attention for their role in distributed computing, approximation of shortest paths and graph compression; see the recent survey~\cite{ahmed2020survey}. One of the primary goals is to determine the tradeoffs between the multiplicative stretch $\alpha$, the additive error $\beta$, and the size of the spanner $H$ as measured by $|E'|$. An additional (sometimes even more desirable) goal is to construct \emph{lightweight} spanners whose total weight is bounded. Of course, one must pay at least the weight of a minimum spanning tree (MST) to simply ensure $H$ is connected; thus, the lightness of a spanner $H$ is measured as the ratio of the edge weight of $H$ to the edge weight of the MST of $G$. For algorithmic applications, it is also desired to construct these spanners efficiently.

Spanners were first studied in the 1980s~\cite{peleg1989graph} with purely multiplicative stretch, i.e., $\beta = 0$. In \cite{althofer1993sparse} it was shown that for all integers $k \ge 1$, all weighted graphs have a $(2k-1,0)$-spanner with size $O(n^{1+1/k})$ and lightness $O(\frac{n}{k})$ via a simple greedy algorithm. The size bound is optimal assuming Erd\H{o}s' girth conjecture~\cite{erdos1963extremal}.
Soon thereafter, \cite{chandra1992sparse} showed for any $0<\eps<1$, the same greedy algorithm yields a $((2k-1)(1+\eps),0)$-spanner with lightness $O_\eps(kn^{1/k})$ and size $O_\eps(n^{1+1/k})$. \footnote{The notation $O_{\eps}(\cdot)$ hides $(\frac{1}{\eps})^{O(1)}$ factors.} Researchers also sought to understand the tradeoffs if one desires a tiny amount $1+\eps$ of multiplicative error, also known as {\em near-additive spanners}. For unweighted graphs, in a sequence of works, \cite{elkin2004spanner,P09,EN19,bodwin2015sparse,abboud2017hierarchy}, $(1+\eps, \beta)$-spanners of size $\approx O(n^{1+1/k})$ where $\beta = O\left(\left(\frac{\log k}{\eps}\right)^{\log k}\right)$ were devised. In \cite{abboud2017hierarchy} it was shown that the additive stretch of near-additive spanners of such size must be $\Omega_k\left(\frac{1}{\eps}\right)^{\log k}$\footnote{The notation $\Omega_{k}(\cdot)$ hides $k^{O(1)}$ factors.}. 
Near-additive spanners were also studied for weighted graphs. Here, the additive error must depend linearly on the maximum edge weight $W_{max}$, or the ``local'' maximum edge weight $W(x,y)$ along a shortest $x$-$y$ path $P_{x,y}$. The latter is typically more preferred as it is possible for $W_{max} \gg W(x,y)$; in this case, we use $W(\cdot,\cdot)$ to indicate that the additive error for each vertex pair $x,y$ is in terms of $W(x,y)$. In both cases, the best one can hope for is to match the corresponding bounds for unweighted graphs. Indeed, \cite{elkin2019almost} showed that all weighted graphs admit $(1+\eps, \left(\frac{\log k}{\eps}\right)^{O(\log k)}W(\cdot,\cdot))$ spanners of size $O(kn + n^{1 + 1/k})$.

In the past decade, there has been an surge of interest in light spanners, and algorithms for their construction were devised in various settings. For general graphs, light spanners were studied in several works \cite{elkin2014light,chechik2016light,LS23,Bodwin23}, culminating in lightness $O(n^{1/k}/\eps)$ (with stretch $(2k-1)(1+\eps)$ and size $O(n^{1+1/k}/\eps)$).
For Euclidean pointsets of dimension $q$, \cite{S91,V91} showed a $(1+\eps)$-spanner of size $n\cdot\eps^{-O(q)}$ and lightness $\eps^{-O(q)}$. For high dimensional pointsets, \cite{FN22} devised $O(k)$-spanners with lightness $\tilde{O}(n^{1/k^2})$.\footnote{The notation $\Oish(f(n))$ hides polylogarithmic factors in $n$.} The low dimensional results were extended for metric spaces of doubling dimension $ddim$: \cite{Gottlieb15,BLW19,FS20} constructed a $(1+\eps)$-spanner with $n\cdot\eps^{-O(ddim)}$ edges and lightness $\eps^{-O(ddim)}$. 
For any graph excluding a fixed minor, \cite{BLW17} devised $(1+\eps)$-spanners with lightness $\tilde{O}(\eps^{-3})$. Light prioritized spanners were shown by \cite{BFN19}, which were used to get an almost MST with constant average distortion.\footnote{A prioritized spanner receives an arbitrary ranking of the points, and should obtain better stretch for high ranking points.} Recently, \cite{FGN24} devised light reliable spanners.\footnote{A reliable spanner is susceptible to massive vertex failures $B\subseteq V$, and still provide meaningful guarantees for all vertex pairs in $V\setminus B^+$, where $B^+$ is only slightly larger than $B$.}
However, despite all these efforts, no previous work gave a meaningful bound on the lightness of near-additive spanners.


\subsection{Our Results}

In this paper we give the first construction of {\em light} near-additive spanners for general graphs. The specific bounds are described in the following theorem.

\begin{theorem} \label{thm:spanner-near-additive}
Let $G$ be a weighted graph, let $0 < \eps < 1$, and let $k$ be a positive integer. Then $G$ has a $\left(1+\eps,O(\frac{k}{\eps})^{k} \cdot W(\cdot,\cdot)\right)$-spanner of size $O(kn^{1 + 3/k})$ and lightness $\Oish(\frac{n^{1/k}}{\eps})$.
\end{theorem}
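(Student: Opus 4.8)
The plan is to marry the scale-based machinery of light spanner constructions \cite{chechik2016light} with a \emph{hierarchical} near-additive spanner construction in the style of Elkin--Neiman and Elkin--Matar \cite{EN19,elkin2019almost}, tuned so that one global parameter $k$ simultaneously controls sparsity, the lightness exponent $n^{1/k}$, and the exponent in the additive error. First I would \emph{bucket the edges by weight}: partition $E$ into classes $E_j=\{e:w(e)\in(2^{j-1},2^j]\}$ and set $G_j=(V,\bigcup_{j'\le j}E_{j'})$. We build a spanner $H_j$ for each nonempty scale $j$ and output $H=\mathrm{MST}(G)\cup\bigcup_j H_j$; the MST is added both for connectivity and, crucially, as a ``free'' backbone whose edges can be reused so that cluster-internal weight is charged to the MST rather than to $n$.

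Within a fixed scale $j$, set a granularity $\delta_j\asymp\eps\,2^j/k$ and \emph{cluster} the relevant vertices using $\delta_j$-balls in $G_j$: take a maximal family of pairwise vertex-disjoint balls, contract each to a supernode, and add each ball's shortest-path tree to $H_j$. The point for lightness is that the set $N_j$ of ball centers is $\delta_j$-separated, so an Euler tour of the MST charges its centers to disjoint segments, giving $\delta_j|N_j|=O(\mathrm{MST})$; the cluster-internal trees must be charged similarly, which forces one to build clusters from MST edges of weight $\le\delta_j$ whenever possible and to handle long thin MST components separately. On the resulting contracted graph $\hat G_j$ --- which has $O(\mathrm{MST}/\delta_j)$ supernodes, small cluster diameter, and minimum edge weight $\gtrsim\delta_j$ --- I would run a hierarchical near-additive spanner: sample a chain $N_j=A_0\supseteq A_1\supseteq\cdots\supseteq A_k=\emptyset$ with $|A_i|\approx|A_{i-1}|\cdot|N_j|^{-1/k}$, form superclusters of geometrically growing radii $R_i\asymp(k/\eps)^i\cdot 2^j$, and interconnect nearby superclusters at each level. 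Each level contributes $O(|N_j|^{1+1/k})$ edges of weight $O(R_i)$, and the top-level radius $(k/\eps)^k2^j$ is exactly what produces the additive term $O(k/\eps)^k\cdot W(\cdot,\cdot)$.

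For correctness, fix a pair $x,y$ and let $j^\star$ be the scale with $2^{j^\star}\asymp W(x,y)$, so the entire shortest $x$--$y$ path lies in $G_{j^\star}$ and has weight $\ge 2^{j^\star-1}$. Routing this path through the clustering of scale $j^\star$, replacing each maximal subpath inside a cluster by that cluster's internal tree, and applying the stretch guarantee of the hierarchical spanner on $\hat G_{j^\star}$ gives $d_H(x,y)\le(1+\eps)d_G(x,y)+O(k/\eps)^k\,W(x,y)$: the $O(\delta_{j^\star})=O(\eps\,2^{j^\star}/k)$ detours at cluster boundaries are absorbed into the $(1+\eps)$ factor, since before the additive term dominates there are only $O(k/\eps)$ such detours, and the bound on the number of cluster crossings is precisely the standard near-additive ``critical pairs''/hop-length argument. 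For lightness, $w(H_j)=O(\mathrm{MST})$ for the cluster trees plus $\sum_i O(|N_j|^{1+1/k}R_i)=O(|N_j|^{1+1/k}\cdot(k/\eps)^k2^j)$; plugging in $|N_j|=O(\mathrm{MST}/\delta_j)$ and summing the geometric series over the $O(\log(nW_{max}))$ scales yields lightness $\Oish(n^{1/k}/\eps)$, and the size bound $O(kn^{1+3/k})$ follows by summing $O(k|N_j|^{1+1/k})$ over scales, with the extra $\mathrm{poly}(n^{1/k})$ overhead coming from the clustering and interconnection steps.

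The main obstacle I anticipate is the \emph{tension between lightness and additive error}: a deep hierarchy (large $k$) is needed both for few edges and for a small lightness exponent $n^{1/k}$, but it drives the radii $R_i$, and hence the additive error, up to $(k/\eps)^k$, whereas a shallow hierarchy keeps the error small but ruins sparsity and lightness. The crux is to balance these and, in particular, to show that the geometric growth of the $R_i$ lets the heavy top-level edges be charged against the MST with only an $n^{1/k}$ --- rather than an $n^{1/k}\cdot(k/\eps)^k$ --- blow-up in lightness. A secondary difficulty is the clustering bookkeeping needed to ensure that cluster-internal weight is paid by the MST and not by $n\cdot\delta_j$; handling long thin MST components there appears to be the source of the factor-loss that weakens the size bound from $n^{1+1/k}$ to $n^{1+3/k}$.
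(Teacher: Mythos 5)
Your plan takes a genuinely different route from the paper (edge-weight bucketing, cluster/contract at granularity $\delta_j\approx\eps 2^j/k$, then run an Elkin--Neiman-style supercluster hierarchy on each contracted graph $\hat G_j$), while the paper runs a single global Thorup--Zwick-style construction: a nested family of $2^i$-nets defines \emph{representatives} $r_j(v)$, a uniformly-sampled chain $V=A_0\supseteq\dots\supseteq A_k$ defines bunches and pivots, each $u$ is connected by a shortest path to $r_j(v)$ for every $v$ in a slightly shrunk bunch $B_{(1-\eps)/2}(u)$ at the scale $2^j\approx\eps d(u,v)$, and shallow-light trees rooted at each $A_i$ supply approximate pivots. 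Crucially, every path the paper adds has length $O(2^j/\eps)\approx d(u,v)$ --- never $\approx(k/\eps)^k 2^j$ --- and a containment lemma (Lemma~\ref{lemma:half-bunch-in-bunch}) shows that all $u$ charging a fixed net point $x\in N_j$ lie in a single $1$-bunch, hence number only $O(n^{1/k}\log n)$. Combined with $|N_j|\cdot 2^j=O(w(\MST))$ this gives lightness $\Oish(n^{1/k}/\eps)$; the factor $O(k/\eps)^k$ then emerges purely from the inductive stretch analysis, not from any heavy edge.

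That decoupling is exactly what your sketch does not achieve, and it is the gap. You flag the "tension between lightness and additive error" and assert that "the geometric growth of the $R_i$ lets the heavy top-level edges be charged against the MST with only an $n^{1/k}$\dots blow-up," but no mechanism for this is given, and a direct calculation shows it fails. Your top level adds $O(|N_j|^{1+1/k})$ interconnection edges of weight $\Theta(R_k)=\Theta((k/\eps)^k 2^j)$; using $|N_j|\cdot 2^j=O(w(\MST)\cdot k/\eps)$ and $|N_j|^{1/k}\le n^{1/k}$, the per-scale contribution is already $\Omega\bigl(n^{1/k}\,(k/\eps)^{k+1}\cdot w(\MST)\bigr)$, so summing the geometric series over scales leaves lightness on the order of $n^{1/k}(k/\eps)^{k+O(1)}$, not $\Oish(n^{1/k}/\eps)$. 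Shrinking the interconnection radii to $O(\delta_j)$ would cure the lightness but destroy the stretch, since after $k$ levels the accumulated additive error would no longer be bounded by a fixed multiple of $W(x,y)$. So this is not a bookkeeping detail you could fill in within the cluster/contract framework; the paper sidesteps it by a structurally different construction in which the lightness-critical paths never have length proportional to the final additive error.
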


We remark that the dependence of $\beta$, the additive stretch, on $k$ is somewhat inferior compared to the result of \cite{elkin2019almost}. However, their construction has unbounded lightness. There is an efficient randomized algorithm to construct the spanners of Theorem~\ref{thm:spanner-near-additive}. Note that the lower bound of \cite{abboud2017hierarchy} also holds for near-additive spanners of bounded lightness.


Additionally, if $W_{max} \ge \sqrt{w(\MST(G))}$, we can construct a $(1+\eps, 2(1+\eps)W_{max})$-spanner of size $O(n^{3/2})$ and lightness $O(\frac{n^{1/2}}{\eps})$.


\subsection{Technical Overview}

One of the most popular techniques to design sparse spanners is the method devised by Thorup-Zwick \cite{TZ05}. The basic idea is to sample a sequence of sets $V=A_0\supseteq A_1\supseteq\dots\supseteq A_k$, and define the bunch of $v\in A_i\setminus A_{i+1}$ as all the vertices closer to $v$ than the nearest vertex in $A_{i+1}$ (which is called the pivot). The spanner is constructed by taking all the shortest paths from $v$ to its bunch, and the path to its pivot. There are certain variations of this basic construction that yield the best known $(\alpha,\beta)$-spanners \cite{NS22}, in essentially all the regimes of multiplicative stretch $\alpha$.

Our main technical innovation is an adaptation of this framework that ensures the lightness of the resulting spanner. Specifically, we construct a hierarchy of {\em nets} in all distance scales of the form $2^i$, and define for each vertex a net-point {\em representative} in each such scale. Now, we replace the connections from each $v$ to each vertex $u$ in its bunch, by connecting $v$ to $u$'s representative in the net of scale $\approx \eps d(u,v)$. In fact, we shall do this only for vertices $u$ which are closer than $\frac{1-\eps}{2}$ times the distance of $v$ to its pivot. Using this variant, we are able to show that any representative participates in $\approx n^{1/k}$ such paths (the basic idea is that the set of all vertices $v$ who connect to a certain representative $x$, must all lie in a single bunch, whose size is bounded). Then we apply a known relation between nets and the weight of the MST (see Lemma~\ref{lem:MSTnet}) to prove the lightness of these paths.

In addition, to complete the $v-u$ spanner path, we need to guarantee that $u$ is connected to each of its representatives. Directly adding a path from $u$ to the nearest net-point in all scales turns out to be too costly in terms of lightness. Instead, we connect $u$ only to $\log\frac{1}{\eps}$ representatives, and show that these connections are light enough, yet they imply there exist ``sufficiently good'' representatives in all scales. Finally, 
we replace the connections to the pivots by taking Shallow Light Trees (SLT) rooted at the pivots. 
The (quite technical) analysis of the stretch has to take into consideration all these various changes, and we indeed show how they only incur a small factor in the additive stretch.

The selection of the sets $A_i$ in previous works, and also here, is done randomly. Each vertex in $A_i$ advances to $A_{i+1}$ either with uniform $n^{-1/k}$ probability, or with exponentially decreasing $\approx n^{-2^i/k}$ probability (in the latter case there are only $\log k$ sets). The improved dependence of the additive stretch $\beta\approx\left(\frac{\log k}{\eps}\right)^{\log k}W(\cdot,\cdot)$ in \cite{elkin2019almost}, is achieved using the latter probability choice. However, a key ingredient in our lightness analysis, that each representative has roughly $n^{1/k}$ paths added to it, crucially relies on the uniform probability choice. For this reason we can only achieve $\beta\approx \left(\frac{k}{\eps}\right)^{k}W(\cdot,\cdot)$, and we leave open the question of obtaining an improved additive stretch.

\subsection{Related Work}

Pure additive spanners (with $\alpha = 1$) have been widely studied in recent years, mostly in unweighted graphs. It is known that all unweighted graphs have $(1,2)$-spanners of size $O(n^{3/2})$~\cite{aingworth99fast}, $(1,4)$-spanners of size $\Oish(n^{7/5})$~\cite{chechik2013new}, and $(1,6)$-spanners of size $O(n^{4/3})$~\cite{baswana2010additive,woodruff2010additive}. Moreover, the upper bound of $O(n^{4/3})$ cannot be improved for pure additive spanners, even if one allows $n^{o(1)}$ additive error~\cite{abboud2016spanner}. In ~\cite{DBLP:conf/stoc/TanZ23}, linear-size spanners  were shown with polynomial additive stretch $O(n^{0.403})$.

For weighted graphs~\cite{ahmed2021weighted,elkin2021improved} show constructions of pure additive $(1, O(1) \cdot W(\cdot,\cdot))$-spanners with size guarantees that nearly match their unweighted counterparts. The only previous construction of light spanner with additive stretch was given by~\cite{ahmed2021weighted}, who devised a  $(1,(4+\eps)W(\cdot,\cdot))$-spanner with lightness $O_{\eps}(n^{2/3})$. This result has pure additive stretch, and thus is incomparable to our results.

\section{Preliminaries}
For a graph $G=(V,E)$ and subset $S \subseteq V$, we ordinarily denote $d_G(v,S) = \min\limits_{x \in S} d_G(v,x)$.  We use $d(u,v)$ and $d(u,S)$ as shorthand for $d_G(u,v)$ and $d_G(u,S)$ respectively, where $G$ is the input graph. For all constructions, we assume, w.l.o.g., $w(\MST(G)) = n$; in other words, the average weight of an MST edge is $\approx 1$. This assumption does not lose generality as one can simply scale the edge weights, as well as $W_{max}$ and $W(\cdot,\cdot)$. Also, let $[k] = \{1,2,3,\ldots,k\}$.
\begin{definition}[Shallow-light tree (SLT)] \label{def:slt}
Given a weighted graph $G=(V,E)$, parameters $\alpha, \gamma \ge 1$, and a root vertex $s \in V$, a spanning tree $T=(V,E')$ is an $(\alpha,\gamma)$-\emph{shallow light tree} (or $(\alpha,\gamma)$-SLT) rooted at $s$ if the following are true:
\begin{itemize}
\item $d_T(s,x) \le \alpha\cdot d(s,x)$ for all $x \in V$, and
\item $w(T) \le \gamma \cdot w(\MST(G))$.
\end{itemize}
\end{definition}
We use the following result by Khuller et al.~\cite{khuller1995balancing}:
\begin{theorem}[\cite{khuller1995balancing}]
Given a weighted graph $G$, a root vertex $s \in V$, and $\eps > 0$, there exists a polynomial-time constructible $(1+\eps, O(\frac{1}{\eps}))$-SLT rooted at $s$.
\end{theorem}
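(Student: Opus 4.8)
The plan is to build the SLT by interpolating between the two extreme spanning trees: a minimum spanning tree $M$ of $G$ (of weight exactly $w(\MST(G))$, but possibly giving very long paths from $s$) and a shortest path tree rooted at $s$ (achieving $d_T(s,v)=d_G(s,v)$ for all $v$, but possibly arbitrarily heavy). First compute $M$ and root it at $s$, compute $\delta(v):=d_G(s,v)$ for all $v$, and fix a shortest $s$-$v$ path $Q_v$ for every $v$ (read off a shortest path tree). Maintain a subgraph $H$, initialized to $M$, and run a DFS of $M$ from $s$. When the DFS first visits a vertex $v$, test whether $d_H(s,v)\le(1+\eps)\,\delta(v)$; if this fails, add the edges of $Q_v$ to $H$ (which forces $d_H(s,v)=\delta(v)$) and call $v$ a \emph{breakpoint}. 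At the end, output a shortest path tree $T$ of $H$ rooted at $s$; this is the claimed SLT.

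The stretch guarantee is immediate: since $H$ only grows, $d_H(s,v)$ is non-increasing during the algorithm, so the inequality $d_H(s,v)\le(1+\eps)\delta(v)$ that holds at the moment $v$ is first visited (either the test passed, or we forced it by adding $Q_v$) still holds at termination. Hence $d_T(s,v)=d_H(s,v)\le(1+\eps)\,d_G(s,v)$, while $d_T(s,v)\ge d_G(s,v)$ since $T\subseteq G$. Computing the MST, the shortest path distances and paths, the DFS, and the final shortest path tree are all polynomial time, so $T$ is polynomial-time constructible.

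The main work is the weight bound. Since $T\subseteq H$ and each $Q_v$ has weight $\delta(v)$, we have $w(T)\le w(H)\le w(M)+\sum_{v\text{ breakpoint}}\delta(v)$, so it suffices to prove $\sum_{v\text{ breakpoint}}\delta(v)=O(w(M)/\eps)$. Fix a breakpoint $v$ and let $a=a(v)$ be the closest strict ancestor of $v$ in $M$ that is a breakpoint or equals $s$; by the DFS order $a$ is well defined when $v$ is visited, and $d_H(s,a)=\delta(a)$ at that moment (it equals $0$ if $a=s$, was set to $\delta(a)$ when the breakpoint $a$ was processed, and can never drop below $d_G(s,a)$). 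Writing $\pi_{a\to v}$ for the $M$-path from $a$ to $v$ and $\ell(v)$ for its weight (a path in $G$, so $d_G(a,v)\le\ell(v)$), we combine $d_H(s,v)\le d_H(s,a)+d_M(a,v)=\delta(a)+\ell(v)$ with the breakpoint condition $d_H(s,v)>(1+\eps)\delta(v)$ and $\delta(v)\ge\delta(a)-\ell(v)$ to get $\eps\,\delta(a)<(2+\eps)\ell(v)$, whence $\delta(v)\le\delta(a)+\ell(v)=O(\ell(v)/\eps)$. It therefore remains to show $\sum_{v\text{ breakpoint}}\ell(v)=O(w(M))$, and this is the crux and the main obstacle: the naive hope that the paths $\pi_{a(v)\to v}$ are edge-disjoint is false (two breakpoints in different subtrees of a common non-breakpoint vertex share a prefix), and handling this overlap is where the argument must be careful. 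The intended resolution exploits that $H$ is updated after each repair: adding $Q_v$ not only fixes $d_H(s,v)=\delta(v)$ but, for every ancestor $u$ of $v$ that is $M$-close to $v$, also pulls $d_H(s,u)$ safely below $(1+\eps)\delta(u)$. Consequently, once the first breakpoint hanging off a given vertex has been processed, any further breakpoint in that vertex's subtree must traverse an MST-weight of $\Omega(\eps\cdot\delta(\cdot))$ beyond it, and these ``fresh'' segments can be made edge-disjoint by descending the DFS tree recursively (formally, one charges each MST edge $O(1)$ times, arguing along the Euler tour where each $M$-edge is traversed exactly twice). This yields $\sum_{v}\ell(v)=O(w(M))$, hence $w(T)=O(w(M)/\eps)=O\big(w(\MST(G))/\eps\big)$, as desired.
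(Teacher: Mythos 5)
First, note that the paper does not prove this statement at all: it is imported verbatim from Khuller, Raghavachari and Young~\cite{khuller1995balancing}, so there is no internal proof to compare against. Your construction is essentially theirs (the ``light approximate shortest-path tree'': DFS the MST, and whenever the current distance estimate of the visited vertex exceeds $(1+\eps)$ times its true distance, splice in a shortest path), and your stretch and polynomial-time arguments are fine. The problem is the weight bound, and you have flagged the exact spot yourself: everything reduces to $\sum_{v\ \mathrm{breakpoint}}\delta(v)=O(w(\MST(G))/\eps)$, which in your scheme reduces further to $\sum_v \ell(v)=O(w(\MST(G)))$, and this you never prove. The ancestor-based charging genuinely fails (breakpoints hanging in sibling subtrees reuse the prefix of the tree path from their common nearest breakpoint ancestor), and the sketched repair does not close it: adding $Q_v$ lowers $d_H(s,\cdot)$ only for vertices that are metrically close to $Q_v$, and in particular says nothing about a later breakpoint $v'$ in a sibling subtree whose path $\pi_{a(v')\to v'}$ shares that prefix; ``these fresh segments can be made edge-disjoint by descending the DFS tree recursively'' is an assertion, not an argument. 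As written, the crux of the theorem is missing.

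The clean fix — and the actual argument of~\cite{khuller1995balancing} — abandons the nearest breakpoint \emph{ancestor} and charges instead to the Euler tour of the MST. Let $u$ be the breakpoint (or $s$) immediately preceding $v$ in DFS visiting order, and let $\mathrm{tour}(u,v)$ be the length of the tour segment between their visits. Since $H\supseteq M$, at the moment $v$ is visited $d_H(s,v)\le d_H(s,u)+\mathrm{tour}(u,v)\le\delta(u)+\mathrm{tour}(u,v)\le\delta(v)+d_G(u,v)+\mathrm{tour}(u,v)\le\delta(v)+2\,\mathrm{tour}(u,v)$, so the breakpoint condition $d_H(s,v)>(1+\eps)\delta(v)$ gives $\eps\,\delta(v)<2\,\mathrm{tour}(u,v)$. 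The tour segments associated with distinct breakpoints are disjoint sub-walks of the Euler tour, whose total length is $2\,w(\MST(G))$, hence $\sum_v\delta(v)\le 4\,w(\MST(G))/\eps$ and $w(T)\le (1+4/\eps)\,w(\MST(G))$. Note that this consecutive-in-tour-order vertex $u$ need not be an ancestor of $v$, which is precisely why the disjointness you were missing comes for free; with this substitution your construction and the rest of your write-up go through, but without it the proof is incomplete.
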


\begin{definition}[$\Delta$-net]\label{def:delta-net}
Given a weighted graph $G=(V,E)$ and a real number $\Delta > 0$, a subset of vertices $N \subseteq V$ is a \emph{$\Delta$-net} if the following hold:
\begin{itemize}
\item $d(x,N) \le \Delta$ for all $x \in V$, and
\item $d(u,v) > \Delta$ for all distinct $u,v \in N$.
\end{itemize}
\end{definition}
A $\Delta$-net $N$ can be computed via the following simple greedy algorithm: while there exists a vertex whose distance to $N$ is greater than $\Delta$, choose such a vertex and add it to $N$. The following lemma relates nets to the weight of the MST.
\begin{lemma}\label{lem:MSTnet}
For any $\Delta$-net $N$ of a graph $G$ with $|N| \ge 2$ vertices, we necessarily have $|N|\Delta\le 2w(\MST(G))$. 
\end{lemma}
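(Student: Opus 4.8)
The plan is to use the standard fact that a minimum spanning tree of $G$, traversed via an Euler tour (doubling each edge), yields a closed walk of length exactly $2w(\MST(G))$ that visits every vertex of $G$, and in particular every vertex of the net $N$. First I would fix the MST $T$ and a closed walk $C$ around $T$ obtained by doubling every tree edge and taking an Eulerian circuit of the resulting multigraph; this walk has total weight $2w(T) = 2w(\MST(G))$ and passes through all vertices, so it induces a cyclic ordering $x_1, x_2, \dots, x_m, x_1$ on the net points of $N$ (listing each net point at one of its occurrences along $C$), where $m = |N| \ge 2$.

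The key step is to charge to each consecutive pair $x_j, x_{j+1}$ in this cyclic order the portion of the walk $C$ between their chosen occurrences. That portion is a walk in $G$ from $x_j$ to $x_{j+1}$, so its weight is at least $d_G(x_j, x_{j+1})$, which by the packing property of a $\Delta$-net (second bullet of Definition~\ref{def:delta-net}) is strictly greater than $\Delta$. Here I use that $x_j \ne x_{j+1}$, which holds because $m \ge 2$ and consecutive entries in the cyclic list are distinct net points. Summing over all $m$ consecutive pairs around the cycle, the walk segments are edge-disjoint (they partition $C$), so
\[
2w(\MST(G)) \;=\; w(C) \;\ge\; \sum_{j=1}^{m} d_G(x_j, x_{j+1}) \;>\; \sum_{j=1}^{m} \Delta \;=\; |N|\,\Delta,
\]
which gives the (even slightly stronger, strict) bound $|N|\Delta < 2w(\MST(G))$, and in particular $|N|\Delta \le 2w(\MST(G))$.

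I expect the only subtlety — not really an obstacle — is bookkeeping around the Euler tour: making sure each net point is assigned exactly one occurrence on $C$ so that the cyclic order is well defined, and that the segments between consecutive chosen occurrences are genuinely edge-disjoint and cover all of $C$. An alternative route that avoids the Euler tour entirely is to argue directly on the tree: root $T$, and for each net point $x$ other than the root's nearest net point, the path in $T$ from $x$ toward the root must contain a sub-path of weight at least $\Delta/2$ before reaching another net point's territory (using that balls of radius $\Delta/2$ around distinct net points are disjoint), and these sub-paths can be chosen edge-disjoint, giving $(|N|-1)\cdot \Delta/2 \le w(T)$; this is slightly weaker but also suffices. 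I would present the Euler-tour argument as the main proof since it yields the clean constant $2$ directly.
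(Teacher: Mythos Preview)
Your Euler-tour argument is correct and complete: doubling the MST gives an Eulerian closed walk of weight $2w(\MST(G))$, choosing one occurrence per net point yields a cyclic order on $N$, and the $|N|$ arcs between consecutive chosen occurrences partition the walk and each have length strictly greater than $\Delta$ by the separation property of the net. The bookkeeping you flag is indeed routine.

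The paper takes a different, more static route: it argues directly that the open balls of radius $\Delta/2$ around distinct net points are pairwise disjoint, and that each such ball must contain at least $\Delta/2$ of the weight of the MST $T$ (counting, for edges that straddle the boundary, only the portion of their weight lying inside the ball). Summing over the $|N|$ disjoint balls gives $w(T)\ge |N|\cdot\Delta/2$. This is essentially the ``alternative route'' you sketch at the end, except the paper charges the full $\Delta/2$ to every net point rather than to all but one, so it gets the clean $|N|\Delta\le 2w(\MST(G))$ without the $-1$. Your Euler-tour proof has the advantage of avoiding any talk of fractional edge weight inside a ball and even yields a strict inequality; the paper's packing argument is a bit shorter and stays entirely on the tree without constructing the tour.
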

\begin{proof}
Let $T$ be the MST of $G$.
Note that balls of radius $\frac{\Delta}{2}$ centered at each vertex $v \in N$ are pairwise disjoint, as every pair of vertices in $N$ is of distance greater than $\Delta$. For each $v\in N$, consider the weight of edges in $T$ that lie in $B=B(v,\frac{\Delta}{2})$ (if an edge $\{x,y\}\in T$ leaves the ball, suppose $x\in B$ and $y\notin B$, we consider the relative weight inside, that is $\frac{\Delta}{2}-d(v,x)$). The observation is that every such ball contains at least $\frac{\Delta}{2}$ of the weight of $T$, which is unique to it. Therefore, $w(T)\ge |N|\frac{\Delta}{2}$.
\end{proof}



\section{A lightweight \texorpdfstring{$(1+\eps, O(\frac{k}{\eps})^k)$}~~near-additive spanner}\label{section:lightweight}



In this section we prove our main result, Theorem~\ref{thm:spanner-near-additive}. The construction of the spanner has three phases; the first phase constructs a hierarchical sequence of $O(\log n)$ $\Delta$-nets (Def.~\ref{def:delta-net}), and adds a small set of lightweight paths connecting vertices to their nearby net points, which will be used to define a set of \emph{representative} vertices. The second phase stratify the vertices into $k$ levels, then add paths from each vertex $u$ to the representatives of its bunch. The third and final phase uses SLTs for the (approximate) pivot connections.

\subsubsection*{First phase: $\Delta$-nets}
In the first phase, we will construct a nested sequence of $\Delta$-nets (Def.~\ref{def:delta-net}), then add shortest paths from each vertex to the closest vertices in some of the $\Delta$-nets. Specifically, let $V=N_{-1}\supseteq N_0 \supseteq \ldots \supseteq N_{ \log_2 n}$, where for each $i\ge 0$, $N_i$ is a $2^i$-net. In particular, $N_{\log_2 n}$ is an $n$-net so $|N_{\log_2 n}| = 1$ (as we assumed  $w(\MST(G)) = n$, so $\diam (G) \le n$), and $N_0$ is a 1-net. The nets can be constructed using the greedy algorithm following Def.~\ref{def:delta-net}: let $N_{\log n}$ consist of a single vertex, and for $i = \log_2 n-1, \ldots, 0$, use the greedy algorithm to add vertices to $N_{i+1}$ until a $2^i$-net $N_i$ is constructed. 

\paragraph{Constructing $H_0$.} For each $i\ge -1$, each $v \in N_i \setminus N_{i+1}$, and for all $j \in \{i+1, \ldots, i + \lceil\log_2 \frac{1}{\eps}\rceil\}$, add a shortest path to $H_0$ from $v$ to the closest vertex in $N_j$ (if $j > \log_2 n$, simply do nothing). Lastly add $H_0$ to $H$.

\begin{lemma}\label{lemma:representative}
For all $v \in V$ and for all $i \in \{0,\ldots,\log_2 n\}$, there exists $x \in N_i$ such that $d_{H_0}(v,x) \le (1+2\eps)2^i$.
\end{lemma}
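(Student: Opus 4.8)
The plan is to prove the statement by considering two cases depending on whether $v$ is "close to" or "far from" the net $N_i$, and in the far case to walk up a chain of nested nets $N_{i_0} \subseteq N_{i_1} \subseteq \dots$, using the connections added in $H_0$ to hop from one net point to the next while the distances stay geometrically controlled.

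First I would fix $v$ and $i$, and let $j_0$ be the largest index $j \le i$ such that $v \in N_{j-1} \setminus N_j$ (equivalently, $v$ belongs to some $N_{j_0-1}\setminus N_{j_0}$ with $j_0 \le i$; note $v \in N_{-1} = V$ always, so this is well-defined, and if $v \in N_i$ itself we are trivially done with $x = v$). By construction of $H_0$, for each $j \in \{j_0, \dots, j_0 + \lceil \log_2 \frac1\eps\rceil\}$ we have added a shortest path from $v$ to its closest point in $N_j$, which has length $\le 2^{j-1}$ since $N_{j-1}$ is a $2^{j-1}$-net and $N_j \subseteq N_{j-1}$ — wait, more carefully, the closest point of $N_{j}$ to $v$ is at distance $\le 2^{j}$ since $N_j$ is a $2^j$-net. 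So in particular $v$ reaches, inside $H_0$, a point $u \in N_{j}$ with $d_{H_0}(v,u) \le 2^j$ for every $j$ in that window of width $\lceil\log_2\frac1\eps\rceil$.

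The key step is then an inductive "laddering" argument: suppose we have reached a point $u \in N_\ell$ for some $\ell < i$ with $d_{H_0}(v,u)$ bounded by a geometric-type sum. Since $u \in N_\ell \subseteq N_{-1}$, let $\ell' \le \ell$ be the index with $u \in N_{\ell'-1}\setminus N_{\ell'}$; then the $H_0$-connections from $u$ let us jump to a point $u' \in N_{\ell'+t}$ for any $t \le \lceil\log_2\frac1\eps\rceil$, at additional cost $\le 2^{\ell'+t-1} \le 2^{\ell+t-1}$. Choosing $t$ as large as possible (or as large as needed to not overshoot $i$) we roughly double the net-scale index at each step while paying a cost that is at most a constant (namely $1/\eps$-ish) times the scale we have reached; the total telescopes to a geometric series dominated by its last term $2^i$, and the overhead from the $\lceil\log_2\frac1\eps\rceil$-windows and the ceilings contributes the $2\eps \cdot 2^i$ slack. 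Concretely I expect the bookkeeping to show $d_{H_0}(v,x) \le 2^i + 2\sum_{m < i} 2^m \cdot (\text{correction}) \le (1+2\eps)2^i$ once the step sizes are tuned so that each jump multiplies the scale by at least $1/\eps$ (hence there are at most $\log_{1/\eps} n$ jumps, but the geometric decay in the tail makes the sum converge to something within a $1+2\eps$ factor of $2^i$).

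The main obstacle I anticipate is making the telescoping precise: one has to argue that after a jump of width $\lceil\log_2\frac1\eps\rceil$ the residual distance-to-go (in net scale) shrinks by a factor $\eps$, so that the accumulated extra length is a convergent geometric series $\sum 2^i \eps^{m}$ rather than something that blows up by a $\log n$ factor — and the ceilings $\lceil\log_2\frac1\eps\rceil$ together with the off-by-one in "$N_{j-1}$ is a $2^{j-1}$-net versus $N_j$ is a $2^j$-net" need to be handled so the constant in front of $\eps$ stays at $2$. A secondary subtlety is the boundary behavior near $j = \log_2 n$ (the "if $j > \log_2 n$, do nothing" clause), but since $N_{\log_2 n}$ is already a single point and $2^i \le n$ for the relevant $i$, the chain always terminates at the top net before running out of room, so this case is benign. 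I would handle the whole argument by a clean downward induction on $i - \ell$ (the remaining number of scales to climb), stating the invariant $d_{H_0}(v,x_\ell) \le 2^\ell(1 + 2\eps(1 - \eps^{\lceil (i-\ell)/\lceil\log_2(1/\eps)\rceil \rceil}))$ or a similar closed form, and checking the base case $\ell = i$ and the inductive step.
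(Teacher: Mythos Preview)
Your overall plan---climb from $v$ through successive nets using the $H_0$ connections and bound the total by a geometric sum dominated by $2^i$---is exactly the paper's approach. However, two points need correction.

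First, your indexing is inverted. If $u\in N_\ell$ and $u\in N_{\ell'-1}\setminus N_{\ell'}$, then the nesting $N_{-1}\supseteq N_0\supseteq\ldots$ forces $\ell'-1\ge\ell$, i.e.\ $\ell'\ge\ell+1$, not $\ell'\le\ell$. Consequently your inequality $2^{\ell'+t-1}\le 2^{\ell+t-1}$ goes the wrong way. This is not fatal---landing in a higher net than expected only helps---but the bound on the cost of a hop must be read off the \emph{target} net scale (because $N_j$ is a $2^j$-net), not the source.

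Second, and more substantively, your greedy rule ``jump as far as possible, truncating the last jump to hit $N_i$ exactly'' does not deliver the constant $1+2\eps$. With $T=\lceil\log_2\frac1\eps\rceil$, all intermediate gaps are $\ge T$, but the \emph{final} gap can be as small as $1$: it is perfectly possible that the penultimate level is $i-1$, so the sum contains both $2^{i-1}$ and $2^i$, already exceeding $(1+2\eps)2^i$ for small $\eps$. Your proposed invariant and your ``residual shrinks by $\eps$'' heuristic both silently assume every gap is $\ge T$, which fails at the last step. The paper avoids this by \emph{pre-aligning the schedule}: write $i=a+bT$ with $0\le a<T$, take a first short hop from $v$ to some $v_0\in N_a$ (this hop is available because $a\le T-1$ lies in every vertex's window), and then hop by exactly $T$ each time up to $N_i$. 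Now the levels are $a, a+T,\ldots,a+bT=i$, every gap is exactly $T$, and the sum is
\[
\sum_{j=0}^{b}2^{a+jT}=2^i\sum_{j=0}^{b}2^{-jT}\le 2^i\sum_{j\ge0}\eps^{\,j}\le(1+2\eps)2^i.
\]
So the missing idea is not in the laddering itself but in arranging the ladder so that the small step comes \emph{first} (to align modulo $T$) rather than last.
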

\begin{proof}

The idea is to follow a sequence of paths in $H_0$ starting from $v$, where each path leads to a vertex $t:=\lceil \log_2 \frac{1}{\eps}\rceil$ levels higher, until we reach a vertex at level $i$.
Write $i=a+bt$ for some integers $0\le a<t$ and $b\ge 0$. First we take the path in $H_0$ from $v$ to the closest vertex $v_0\in N_a$. Then, we take additional $b$ paths, the $j$th one, for $1\le j\le b$, will be from $v_{j-1}\in N_{a+(j-1)t}$ to the closest $v_j\in N_{a+jt}$. All these paths exist in $H_0$, since every vertex in a net is connected to a vertex in all the $t$ nets above it.

Since $N_{a+jt}$ is a $2^{a+jt}$-net, the total length of the concatenation of these paths, that lead from $v$ to a vertex $v_b\in N_i$, is at most
\[
\sum_{j=0}^b 2^{a+jt}=2^i\cdot\sum_{j=0}^b2^{-jt}\le 2^i\cdot\sum_{j=0}^b\eps^j \le 2^i\cdot\left(1+2\eps\right)~,
\]
where we use that $2^{-t}\le \eps$, and the last inequality uses that $\eps<1/10$.

 \end{proof} 
Following Lemma~\ref{lemma:representative}, we define for each $v \in V$ and $i \in \{0,\ldots, \log_2 n\}$ the \emph{level-$i$ representative} $r_i(v)$ to be a vertex $x \in N_i$ such that $d_{H_0}(v,x) \le (1+2\eps)2^i$. These representatives will be used in the second phase of the construction. Note that if $v \in N_i$, then $r_i(v) = v$. We also define for $i<0$, $r_i(v)=v$. 

\subsubsection*{Second phase: Bunches and Representatives}
We describe the second phase of the construction. This phase is similar to those given in~\cite{thorup2006spanner,elkin2019almost}, in which the main idea is to construct a hierarchy of vertex sets $V = A_0 \supseteq A_1 \supseteq A_2 \supseteq \ldots \supseteq A_k \supseteq A_{k+1} = \emptyset$, where each vertex in $A_{i-1}$ is included in $A_i$ with a certain probability. The sets $A_i$ are unrelated to the vertices $N_i$ in the $2^i$-net. The main difference is that instead of connecting each vertex $u$ to its bunch, we connect it to a carefully chosen set of representatives of a bunch whose radius is slightly less than half the distance to $u$'s pivot (unlike~\cite{elkin2019almost}, which used \emph{half-bunches}). 

Let $V = A_0 \supseteq A_1 \supseteq A_2 \supseteq \ldots \supseteq A_k \supseteq A_{k+1} = \emptyset$ be sets of vertices. For $i \in [k]$, $A_i$ is obtained by sampling each vertex in $A_{i-1}$ independently with probability $n^{-1/k}$. 
Given $u \in V$ and $i \in \{0,\ldots,k\}$, the level-$i$ \emph{pivot} of $u$, denoted $p_i(u)$, is a vertex $x \in A_i$ such that $d(u,x) = d(u,A_i)$; note that if $u \in A_i$, then $p_i(u) = u$. Given $u \in A_i \setminus A_{i+1}$, and a real number $\delta > 0$, let the $\delta$-\emph{bunch} centered at $u$ be defined as follows:
\[B_{r}(\delta) = \begin{cases}
\{v \in A_i : d(u,v) < \delta \cdot d(u, p_{i+1}(u))\} & 0 \le i \le k-1 \\
A_{k} & i = k
\end{cases}\]
\paragraph{Adding paths to representatives.} We now add paths to the spanner $H$ as follows: for all vertices $u \in V$, and for all $v \in B_{\frac{1-\eps}{2}}(u)$, let $j \le\log n$ be the unique integer such that $\frac{\eps}{8} \cdot d(u,v) \le 2^j < \frac{\eps}{4} \cdot d(u,v)$. We connect $u$ to the level-$j$ representative $r_j(v)$ of $v$ via a shortest path.
Note that if $j<0$, we add a direct $u-v$ shortest path to $H$, since we defined $r_j(v)=v$ for such $j$. On the other hand, we cannot have $j>\log n$, because by our assumption on the MST weight, $d(u,v)\le \diam(G)\le n$. Denote by $r(u,v)$ the level-$j$ representative of $v$ that $u$ connected to.

\subsubsection*{Third phase: SLTs}

Lastly, for each $i \in [k]$, consider a ``virtual'' root vertex $s_i$ which is connected to the vertices in $A_i$ with (virtual) zero-weight edges, and add the edges of a $\left(1+\eps, O\left(\frac{1}{\eps}\right)\right)$-SLT rooted at $s_i$ to the spanner (excluding the zero-weight edges). Let $H$ be the resulting subgraph. This final step effectively creates $k$ forests of SLTs rooted at $A_1, \dots, A_{k}$, in which for every vertex $u$ and $i \in \{0,\ldots,k\}$, we have  $d_H(u,A_i) \le (1+\eps)d(u,A_i) = (1+\eps)d(u,p_i(u))$. Note that, unlike the constructions in~\cite{thorup2006spanner,elkin2019almost}, our construction does not necessarily connect centers $u$ to their pivots $p_i(u)$ directly, but instead connects to ``approximate'' pivots in $A_i$ which are $1+\eps$ times the distance $d(u,A_i)$ to $u$'s actual pivot. We denote this approximate pivot by $p_{i}'(u)$, such that $p_i'(u) \in A_i$, and $d_H(u,p_{i}'(u)) \le (1+\eps)d(u,p_i(u))$ (see Figure~\ref{fig:connections}).

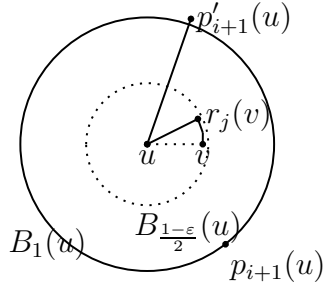
\begin{figure}
\centering

\tikzset{every picture/.style={line width=0.75pt}} 

\begin{tikzpicture}[x=0.75pt,y=0.75pt,yscale=-0.5,xscale=0.5]

\draw  [dash pattern={on 0.84pt off 2.51pt}] (271.5,143.1) .. controls (271.5,109.08) and (299.08,81.5) .. (333.1,81.5) .. controls (367.12,81.5) and (394.7,109.08) .. (394.7,143.1) .. controls (394.7,177.12) and (367.12,204.7) .. (333.1,204.7) .. controls (299.08,204.7) and (271.5,177.12) .. (271.5,143.1) -- cycle ;
\draw    (333.1,143.1) -- (384.2,117.6) ;
\draw [shift={(384.2,117.6)}, rotate = 333.48] [color={rgb, 255:red, 0; green, 0; blue, 0 }  ][fill={rgb, 255:red, 0; green, 0; blue, 0 }  ][line width=0.75]      (0, 0) circle [x radius= 2.34, y radius= 2.34]   ;
\draw [shift={(333.1,143.1)}, rotate = 333.48] [color={rgb, 255:red, 0; green, 0; blue, 0 }  ][fill={rgb, 255:red, 0; green, 0; blue, 0 }  ][line width=0.75]      (0, 0) circle [x radius= 2.34, y radius= 2.34]   ;
\draw  [dash pattern={on 0.84pt off 2.51pt}]  (333.1,143.1) -- (389.1,143.1) ;
\draw [shift={(389.1,143.1)}, rotate = 0] [color={rgb, 255:red, 0; green, 0; blue, 0 }  ][fill={rgb, 255:red, 0; green, 0; blue, 0 }  ][line width=0.75]      (0, 0) circle [x radius= 2.34, y radius= 2.34]   ;
\draw   (205.13,143.1) .. controls (205.13,72.42) and (262.42,15.13) .. (333.1,15.13) .. controls (403.78,15.13) and (461.08,72.42) .. (461.08,143.1) .. controls (461.08,213.78) and (403.78,271.08) .. (333.1,271.08) .. controls (262.42,271.08) and (205.13,213.78) .. (205.13,143.1) -- cycle ;
\draw    (389.1,143.1) -- (389.2,131.6) ;
\draw    (389.2,131.6) -- (384.2,117.6) ;
\draw    (412.1,244.1) ;
\draw [shift={(412.1,244.1)}, rotate = 0] [color={rgb, 255:red, 0; green, 0; blue, 0 }  ][fill={rgb, 255:red, 0; green, 0; blue, 0 }  ][line width=0.75]      (0, 0) circle [x radius= 2.34, y radius= 2.34]   ;
\draw    (333.1,143.1) -- (377.2,16.6) ;
\draw [shift={(377.2,16.6)}, rotate = 289.22] [color={rgb, 255:red, 0; green, 0; blue, 0 }  ][fill={rgb, 255:red, 0; green, 0; blue, 0 }  ][line width=0.75]      (0, 0) circle [x radius= 2.34, y radius= 2.34]   ;
\draw [shift={(333.1,143.1)}, rotate = 289.22] [color={rgb, 255:red, 0; green, 0; blue, 0 }  ][fill={rgb, 255:red, 0; green, 0; blue, 0 }  ][line width=0.75]      (0, 0) circle [x radius= 2.34, y radius= 2.34]   ;

\draw (333.1,146.5) node [anchor=north] [inner sep=0.75pt]    {$u$};
\draw (389.2,114.6) node [anchor=west] [inner sep=0.75pt]    {$r_{j}(v)$};
\draw (389.1,146.5) node [anchor=north] [inner sep=0.75pt]    {$v$};
\draw (190,225.4) node [anchor=north west][inner sep=0.75pt]    {$B_{1}(u)$};
\draw (317,207.4) node [anchor=north west][inner sep=0.75pt]    {$B_{\frac{1-\eps}{2}}(u)$};
\draw (414.1,247.5) node [anchor=north west][inner sep=0.75pt]    {$p_{i+1}(u)$};
\draw (380.2,16.6) node [anchor=west] [inner sep=0.75pt]    {$p'_{i+1}(u)$};

\end{tikzpicture}
\caption{Illustration of the paths added during the three phases. The dotted and solid circles represent the $\frac{1-\eps}{2}$-bunch, and the 1-bunch of $u \in A_i \setminus A_{i+1}$, respectively. The first phase connects $v$ to its representative $r_j(v)$, the second phase connects $u$ to $r_j(v)$, which is one of the representatives of its $\frac{1-\eps}{2}$-bunch, and the third phase connects $u$ to $p'_{i+1}(u)$, the approximate pivot, via the SLT rooted at $s_{i+1}$.}
\label{fig:connections}
\end{figure}

\subsection{Stretch of the spanner}\label{subsection:stretch}
In this section, we prove that $H$ is an $(\alpha,\beta)$-spanner with $\alpha$ and $\beta$ as given in Theorem~\ref{thm:spanner-near-additive}. The proofs in this section follow along the lines of the proofs in~\cite{elkin2019almost} regarding sparse, near-additive emulators and spanners; the main difference is that by using paths from vertices to ``approximate'' pivots using SLTs, and by using paths from vertices to the representatives of their $\frac{1-\eps}{2}$-bunches (instead of direct paths to their $\frac{1}{2}$-bunch), we incur slightly weaker upper bounds on the additive error. We first show the following lemma:
\begin{lemma} \label{lemma:distance-in-bunch}
Let $u \in V$ and $v \in B_{\frac{1-\eps}{2}}(u)$, then $d_H(u,v) \le (1+\eps)d(u,v)$.
\end{lemma}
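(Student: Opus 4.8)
The plan is to route the path in $H$ from $u$ to $v$ through the representative $r(u,v)$ to which $u$ was explicitly connected during the second phase. Recall that since $v \in B_{\frac{1-\eps}{2}}(u)$, the construction adds a shortest $u$--$r(u,v)$ path to $H$, where $r(u,v) = r_j(v)$ and $j$ is the unique integer satisfying $\frac{\eps}{8} d(u,v) \le 2^j < \frac{\eps}{4} d(u,v)$. I would first dispense with the degenerate cases: if $v = u$ the claim is trivial, and if $j < 0$ then by convention $r_j(v) = v$, so the added path is a direct shortest $u$--$v$ path and $d_H(u,v) = d(u,v)$. Hence assume $0 \le j \le \log_2 n$.

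For the main case, I would combine two ingredients. First, the shortest $u$--$r_j(v)$ path added in the second phase gives $d_H(u, r_j(v)) \le d(u, r_j(v))$. Second, by the first phase, $r_j(v) \in N_j$ is a level-$j$ representative of $v$, so by Lemma~\ref{lemma:representative} and the definition of representatives, $d_{H_0}(v, r_j(v)) \le (1+2\eps)2^j$; since $H_0 \subseteq H$ this is also an upper bound on $d_H(v, r_j(v))$, and since $d_G \le d_{H_0}$ it bounds $d(v, r_j(v))$ as well. Applying the triangle inequality in $G$ we get $d(u, r_j(v)) \le d(u,v) + (1+2\eps)2^j$, and therefore
\[
d_H(u,v) \le d_H(u, r_j(v)) + d_H(r_j(v), v) \le d(u,v) + 2(1+2\eps)2^j.
\]
Plugging in $2^j < \frac{\eps}{4} d(u,v)$ yields $2(1+2\eps)2^j < \frac{\eps(1+2\eps)}{2} d(u,v) \le \eps\, d(u,v)$ (for $\eps$ small enough, e.g.\ $\eps < 1/10$ as in Lemma~\ref{lemma:representative}), so $d_H(u,v) \le (1+\eps) d(u,v)$.

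I do not expect a real obstacle here — the argument is a single triangle inequality on top of the scale choice $2^j \approx \tfrac{\eps}{4} d(u,v)$, which was engineered precisely so that the detour through the representative costs only an $O(\eps)$-fraction of $d(u,v)$. The only points needing a little care are (i) verifying that both legs of the route — the $H_0$-path from $v$ to $r_j(v)$ and the second-phase shortest path from $u$ to $r_j(v)$ — genuinely lie in $H$ (they do, since $H_0 \subseteq H$ and the second-phase paths are added to $H$), and (ii) correctly handling the boundary case $j < 0$ using the convention $r_j(v) = v$.
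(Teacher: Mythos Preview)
Your proof is correct and follows essentially the same route as the paper: handle the degenerate case $r(u,v)=v$ directly, then for $0\le j\le\log_2 n$ use the second-phase shortest path $u\to r_j(v)$ together with the $H_0$-bound $d_H(v,r_j(v))\le(1+2\eps)2^j$ from Lemma~\ref{lemma:representative}, apply the triangle inequality to get $d_H(u,v)\le d(u,v)+2(1+2\eps)2^j$, and finish via $2^j<\tfrac{\eps}{4}d(u,v)$. This matches the paper's argument line for line.
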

\begin{proof}
Let $r=r(u,v)$ (recall that $r(u,v)$ is the level-$j$ representative of $v$ that $u$ connected to).
If $r=v$ then we added a shortest $u-v$ path to $H$, therefore the statement holds. Otherwise, $r$ is chosen as $r_j(v)$ for $0\le j\le \log n$ satisfying $\frac{\eps}{8} d(u,v) \le 2^j < \frac{\eps}{4} d(u,v)$. Note that in phase 2 we added a shortest $u-r$ path to $H$. By Lemma~\ref{lemma:representative} we have that $d_H(v,r)\le(1+2\eps)2^j<\frac{\eps}{2} d(u,v)$, thus

\begin{align}
d_H(u,v) &\le d_H(u,r)+d_H(v,r) \tag*{}\\
&= d(u,r) +d_H(v,r)\le d(u,v) + d(v,r)+d_H(v,r) \tag*{}\\
&\le d(u,v) + 2d_H(v,r) < \left(1+\eps\right)d(u,v). \label{eq:dist-to-rep}
\end{align}
 \end{proof}


The following lemma will be useful for pairs that are sufficiently far apart.

\begin{lemma}\label{lemma:distance-long}
Fix $0 < \eps < \dfrac{1}{10}$, let $\Delta > 7$, let $0 \le i \le k$ and $x,y \in V$ such that $d(x,y) \ge (3\Delta)^i W(x,y)$. Let $\eta_i = 1+\eps+\frac{14}{\Delta-7}i$ and $\zeta = \frac{3\Delta}{\Delta-7}$. Then at least one of the following statements is true:
\begin{itemize}
\item $d_H(x,y) \le \eta_i d(x,y)$
\item $d_H(x,A_{i+1}) \le \zeta d(x,y)$.
\end{itemize}
\end{lemma}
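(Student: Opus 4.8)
The plan is to prove Lemma~\ref{lemma:distance-long} by induction on $i$, following the standard Thorup--Zwick / Elkin--Neiman style argument adapted to our representative-based construction.

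\paragraph{Base case $i=0$.} Here $d(x,y) \ge W(x,y)$. Consider a shortest $x$--$y$ path $P_{x,y}$. Either $y$ lies in the $\frac{1-\eps}{2}$-bunch of $x$ (so by Lemma~\ref{lemma:distance-in-bunch}, $d_H(x,y)\le (1+\eps)d(x,y) = \eta_0 d(x,y)$, giving the first alternative), or $d(x,p_1(x)) \le \frac{2}{1-\eps}d(x,y)$. In the latter case, using the SLT from the third phase, $d_H(x,A_1) \le (1+\eps)d(x,p_1(x)) \le \frac{2(1+\eps)}{1-\eps}d(x,y)$, and one checks that $\frac{2(1+\eps)}{1-\eps} \le \zeta = \frac{3\Delta}{\Delta-7}$ for $\Delta > 7$ and $\eps < 1/10$, giving the second alternative. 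Actually, to make the induction step go through, I would want the base case phrased via a \emph{prefix} argument: walk along $P_{x,y}$ from $x$ and either the whole path stays inside the $\frac{1-\eps}{2}$-bunch of $x$, or there is a first vertex on it escaping the bunch, from which the pivot bound follows.

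\paragraph{Inductive step.} Assume the statement for $i-1$ and all pairs, and suppose it fails at level $i$ for $x,y$ with $d(x,y)\ge (3\Delta)^i W(x,y)$; we must derive the pivot alternative $d_H(x,A_{i+1}) \le \zeta d(x,y)$. Take a shortest $x$--$y$ path and let $z$ be the vertex on it at distance roughly $\frac{d(x,y)}{3\Delta}$ from $x$ — more precisely the farthest vertex from $x$ along the path with $d(x,z) \le \frac{d(x,y)}{3\Delta}$; since edges have weight at most $W(x,y) \le \frac{d(x,y)}{(3\Delta)^i} \le \frac{d(x,y)}{3\Delta}$, we get $d(x,z) \le \frac{d(x,y)}{3\Delta}$ and $d(z,y) = d(x,y)-d(x,z)$ large enough that $d(z,y) \ge (3\Delta)^{i-1}W(z,y)$ still holds (using $W(z,y)\le W(x,y)$ and that $d(z,y) \ge (1-\frac{1}{3\Delta})d(x,y)$). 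Apply the inductive hypothesis to the pair $z,y$ at level $i-1$. If we land in the first alternative for $z,y$, then $d_H(z,y) \le \eta_{i-1}d(z,y)$; combining with a bound on $d_H(x,z)$ — obtained because $z$ is so close to $x$ that $z$ lies in a suitable bunch of $x$, or via another application of the hypothesis along the short prefix — one assembles $d_H(x,y) \le d_H(x,z)+d_H(z,y) \le \eta_i d(x,y)$, contradicting our assumption of failure. Hence we must be in the \emph{second} alternative for $z,y$: $d_H(z,A_i) \le \zeta d(z,y) \le \zeta d(x,y)$. Let $q = p_i'(z)$ be the approximate pivot reached, so $d_H(z,q) \le \zeta d(x,y)$ and $q \in A_i$. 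Now consider the pivot $p_{i+1}(x)$ of $x$ at level $i+1$ (if $x \in A_{i+1}$ we are trivially done). If $d(x,q) < \frac{1-\eps}{2}d(x,p_{i+1}(x))$, then $q$ is in the $\frac{1-\eps}{2}$-bunch of $x$, so $x$ connected (via its representative) to $q$, and Lemma~\ref{lemma:distance-in-bunch} gives $d_H(x,q) \le (1+\eps)d(x,q) \le (1+\eps)(d(x,z)+d(z,q))$, which with the SLT-to-$A_{i+1}$ step at $q$ (note $d_H(q, A_{i+1}) \le (1+\eps) d(q,p_{i+1}(q)) \le (1+\eps)d(q,x) \le \ldots$) yields $d_H(x,A_{i+1}) \le \zeta d(x,y)$. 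Otherwise $d(x,p_{i+1}(x)) \le \frac{2}{1-\eps}d(x,q) \le \frac{2}{1-\eps}(d(x,z)+d_H(z,q))$, which is $O(d(x,y))$, and the third-phase SLT then gives $d_H(x,A_{i+1}) \le (1+\eps)d(x,p_{i+1}(x)) \le \zeta d(x,y)$ after checking the constants. Throughout, the precise choice of the constant $\zeta = \frac{3\Delta}{\Delta - 7}$ and the slack $\frac{14}{\Delta-7}$ per level in $\eta_i$ are calibrated so that the recursion $\eta_i \le \eta_{i-1} + (\text{small})$ and the various $\frac{2(1+\eps)}{1-\eps}$-type factors all fit under $\zeta$.

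\paragraph{Main obstacle.} The delicate part is the bookkeeping in the inductive step: one must route $x \to z \to q \to A_{i+1}$ where each leg incurs either a $(1+\eps)$ factor (from Lemma~\ref{lemma:distance-in-bunch} or an SLT) or the $\zeta$ factor (from the hypothesis), and show the \emph{sum} of leg lengths is still $\le \zeta d(x,y)$, while simultaneously, in the complementary case, showing the accumulated multiplicative error stays $\le \eta_i$. Getting $z$ positioned correctly — close enough to $x$ that $d(x,z) \le \frac{d(x,y)}{3\Delta}$ so the error it contributes is a $\frac{1}{3\Delta}$-fraction, yet far enough that $d(z,y)$ still satisfies the level-$(i-1)$ hypothesis $d(z,y) \ge (3\Delta)^{i-1}W(z,y)$ — is exactly where the bound $\Delta > 7$ and the assumption $d(x,y) \ge (3\Delta)^i W(x,y)$ are consumed, and where a careless estimate would blow up the constant. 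I would handle this by first nailing down the two inequalities on $d(x,z)$ and $d(z,y)$ cleanly, then treating the ``good prefix'' / ``escaping vertex'' dichotomy for the $x \to z$ leg exactly as in the base case, so that the two halves of the proof share machinery.
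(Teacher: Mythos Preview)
Your inductive step has a genuine quantitative gap that the constants cannot absorb. You apply the hypothesis at level $i-1$ to the \emph{long} suffix $(z,y)$, where $d(z,y) \ge (1-\tfrac{1}{3\Delta})d(x,y)$. When the second alternative fires, it only gives $d_H(z,A_i) \le \zeta d(z,y) \approx \zeta d(x,y)$, so the approximate pivot $q=p_i'(z)$ may sit at distance $\approx \zeta d(x,y)$ from the path. From there your Case~2 computes
\[
d_H(x,A_{i+1}) \le (1+\eps)\cdot\frac{2}{1-\eps}\bigl(d(x,z)+d(z,q)\bigr) \;\ge\; \frac{2(1+\eps)}{1-\eps}\,\zeta\,d(x,y),
\]
and $\frac{2(1+\eps)}{1-\eps}>2$, so this can never be $\le \zeta d(x,y)$. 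Your Case~1 is also broken: the claim ``$q\in B_{\frac{1-\eps}{2}}(x)$'' from $d(x,q)<\frac{1-\eps}{2}d(x,p_{i+1}(x))$ is false unless $x\in A_i\setminus A_{i+1}$, since $x$'s bunch is defined via $p_{j+1}(x)$ where $j$ is $x$'s own level; and the inequality $d(q,p_{i+1}(q))\le d(q,x)$ you invoke would require $x\in A_{i+1}$, the trivial case.

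The paper avoids both problems by a different decomposition: it cuts $P_{x,y}$ into many \emph{short} segments, each of length at most $d(x,y)/\Delta$, and applies the level-$i$ hypothesis to every segment. If the first alternative holds everywhere, summing gives the first alternative at level $i{+}1$. Otherwise it takes the leftmost and rightmost failing segments and obtains \emph{two} approximate pivots $p_{i+1}'(u_l),p_{i+1}'(u_r)\in A_{i+1}$, each at distance only $\le \tfrac{\zeta}{\Delta}d(x,y)$ from the path (this is where the short segments pay off). The bunch test is then performed between these two level-$(i{+}1)$ vertices, so its failure yields a bound on $d(\cdot,A_{i+2})$ directly, and all $\zeta$-terms enter only through the small factor $\zeta/\Delta$; this is exactly what makes the recursion $\zeta = 3 + 7\zeta/\Delta$ close. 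Your single-split-point approach cannot produce a pivot that is both at level $i$ and close (on scale $d(x,y)/\Delta$) to the path, which is the missing idea.
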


The proof of \Cref{lemma:distance-long} is in
\cref{sec:lemma:distance-long:proof}.

The next lemma is useful for vertex pairs which are relatively nearby.
\begin{lemma}\label{lemma:distance-short}
Let $0 < \eps < \dfrac{1}{10}$, let $0 \le i < k$, let $x,y \in V$, and let $m = \max\{d(x,A_i), d(y,A_i), d(x,y)\}$. Then at least one of the following holds:

\begin{itemize}
\item $d_H(x,y) \le 6m$
\item $d_H(x,A_{i+1}) \le 10m$ and $d_H(y,A_{i+1}) \le 10m$.
\end{itemize}
\end{lemma}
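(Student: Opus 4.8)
The plan is a single case analysis driven by where the (approximate) level-$i$ pivots of $x$ and $y$ land relative to $A_{i+1}$. First I would set $u := p_i'(x)$ and $v := p_i'(y)$. By the SLT approximate-pivot property we have $u,v \in A_i$ with $d(x,u)\le d_H(x,u)\le(1+\eps)d(x,A_i)\le(1+\eps)m$ and likewise $d(y,v)\le d_H(y,v)\le(1+\eps)m$, so the triangle inequality gives $d(u,v)\le(3+2\eps)m$. These two vertices are what I will route through.

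The first case to peel off is $u\in A_{i+1}$ (and, symmetrically, $v\in A_{i+1}$). Then $d(x,A_{i+1})\le d(x,u)\le(1+\eps)m$ and $d(y,A_{i+1})\le d(x,y)+d(x,u)\le(2+\eps)m$, so the SLT property applied at level $i+1$ gives $d_H(x,A_{i+1})\le(1+\eps)^2 m$ and $d_H(y,A_{i+1})\le(1+\eps)(2+\eps)m$, both at most $10m$ for $\eps<1/10$, and the second alternative of the lemma holds. Otherwise $u,v\in A_i\setminus A_{i+1}$, so $B_{\frac{1-\eps}{2}}(u)$ and $B_{\frac{1-\eps}{2}}(v)$ are exactly the level-$i$ bunches of $u$ and $v$, and I split once more. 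If $v\in B_{\frac{1-\eps}{2}}(u)$, or symmetrically $u\in B_{\frac{1-\eps}{2}}(v)$ --- this two-sided test is the crucial move --- then Lemma~\ref{lemma:distance-in-bunch} gives $d_H(u,v)\le(1+\eps)d(u,v)\le(1+\eps)(3+2\eps)m$, and concatenating the $H$-paths $x$ to $u$, $u$ to $v$, and $v$ to $y$ yields $d_H(x,y)\le(1+\eps)(5+2\eps)m\le 6m$ for $\eps<1/10$, so the first alternative holds. If instead $v\notin B_{\frac{1-\eps}{2}}(u)$ and $u\notin B_{\frac{1-\eps}{2}}(v)$, then, since $v\in A_i$ (resp.\ $u\in A_i$), the only possible obstruction to bunch membership is $d(u,v)\ge\frac{1-\eps}{2}d(u,A_{i+1})$ (resp.\ $d(u,v)\ge\frac{1-\eps}{2}d(v,A_{i+1})$); hence $d(u,A_{i+1})$ and $d(v,A_{i+1})$ are each at most $\frac{2(3+2\eps)}{1-\eps}m$. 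Then $d(x,A_{i+1})\le d(x,u)+d(u,A_{i+1})\le\bigl((1+\eps)+\frac{2(3+2\eps)}{1-\eps}\bigr)m$, and symmetrically $d(y,A_{i+1})$ is bounded via $v$; multiplying by the SLT factor $(1+\eps)$ keeps both $d_H(x,A_{i+1})$ and $d_H(y,A_{i+1})$ below $10m$ for $\eps<1/10$, so the second alternative holds.

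I expect the main obstacle to be exactly the two-sided bunch test in the last subcase: if one only checks whether $v\in B_{\frac{1-\eps}{2}}(u)$, then in the failure case one is forced to bound $d(y,A_{i+1})$ by going through $u$, which costs an extra $d(u,v)$-sized term and breaks the constant $10$; routing $y$ to its own approximate pivot $v$ and bounding $d(v,A_{i+1})$ directly is what makes the constants close. The other thing to be careful about is that the bunch of $u$ used in phase~2 and in Lemma~\ref{lemma:distance-in-bunch} is the level-$i$ bunch only when $u\notin A_{i+1}$, which is precisely why the $u\in A_{i+1}$ (and $v\in A_{i+1}$) case must be separated out first. Everything else is routine verification of the numeric inequalities $(1+\eps)(5+2\eps)\le 6$, $(1+\eps)\bigl((1+\eps)+\frac{2(3+2\eps)}{1-\eps}\bigr)\le 10$, and $(1+\eps)(2+\eps)\le 10$ on $(0,1/10)$, all of which hold with room to spare.
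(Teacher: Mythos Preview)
Your proof is correct and follows the same strategy as the paper: route through the approximate pivots $u=p_i'(x)$ and $v=p_i'(y)$, bound $d(u,v)$ by roughly $3.2m$, and split on whether one lies in the other's $\frac{1-\eps}{2}$-bunch, invoking Lemma~\ref{lemma:distance-in-bunch} in the ``yes'' case and the SLT at level $i+1$ in the ``no'' case. You are in fact slightly more careful than the paper's write-up, which does not separate out the case $p_i'(x)\in A_{i+1}$ and only spells out the $d_H(x,A_{i+1})\le 10m$ bound, leaving the symmetric $d_H(y,A_{i+1})$ bound (i.e., your two-sided bunch test) implicit.
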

\begin{proof}
Using the SLT rooted in $s_i$,  we have $d_H(x,p_i'(x))\le (1+\eps)d(x,A_i)$ and similarly $d_H(y,p_i'(y))\le (1+\eps)d(y,A_i)$, so that 
\begin{equation}\label{eq:qqq}
d(p_i'(x), p_i'(y)) \le d(p_i'(x),x) + d(x,y) + d(y,p_i'(y)) \le m + 2(1+\eps)m \le 3.2m~.
\end{equation}
The first case is that $p_i'(y) \in B_{\frac{1-\eps}{2}}(p_i'(x))$. Then by Lemma~\ref{lemma:distance-in-bunch}, $d_H(p_i'(x),p_i'(y))\le(1+\eps)d(p_i'(x), p_i'(y))$. 
\begin{align*}
d_H(x,y) &\le d_H(x,p_i'(x)) + d_H(p_i'(x), p_i'(y)) + d_H(p_i'(y), y) \\
&\stackrel{\eqref{eq:qqq}}{\le} (1+\eps)m + (1+\eps)(3.2m) + (1+\eps)m \\
&\le 6m.
\end{align*}
If $p_i'(y) \not\in B_{\frac{1-\eps}{2}}(p_i'(x))$ then 
\[
d(p_i'(x),p_{i+1}(p_i'(x)))\le \frac{2}{1-\eps} d(p_i'(x), p_i'(y)) \stackrel{\eqref{eq:qqq}}{\le} \frac{6.4m}{1-\eps}~.
\]
Now the second item holds, as
\begin{align*}
d_H(x,A_{i+1}) &\le (1+\eps)d(x,A_{i+1}) \\
&\le (1+\eps)[d(x,p_i'(x)) + d(p_i'(x), p_{i+1}(p_i'(x)))] \\
&\le (1+\eps)\left[(1+\eps)m + \frac{6.4m}{1-\eps}\right] \\
&\le 10m.
\end{align*}
\end{proof}
Now we can show the following:
\begin{lemma}\label{lemma:distance-spanner}
$H$ is an $(\alpha,\beta)$-spanner with $\alpha = 1+O(\eps)$ and $\beta = O(\frac{k}{\eps})^{k}$.
\end{lemma}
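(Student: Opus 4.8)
\textbf{Proof proposal for Lemma~\ref{lemma:distance-spanner}.}

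The plan is to fix a pair $x,y\in V$ and distinguish cases according to how large $d(x,y)$ is relative to $W(x,y)$. The ``deep'' case is handled by iterating Lemma~\ref{lemma:distance-long}; the ``shallow'' case (when $d(x,y)$ is small compared to the net scales, or the iteration fails to terminate) is cleaned up by Lemma~\ref{lemma:distance-short} together with Lemma~\ref{lemma:distance-in-bunch} and the top-level guarantee $A_k\ne\emptyset$. Throughout I would set $\Delta$ to be a fixed constant chosen large enough to make the error terms $\frac{14}{\Delta-7}i$ and $\zeta=\frac{3\Delta}{\Delta-7}$ manageable (e.g.\ $\Delta$ a constant times $k$, so that $\eta_k = 1+\eps+\frac{14k}{\Delta-7} = 1+O(\eps)$ when $\Delta = \Theta(k/\eps)$; this is what forces the $\beta\approx(k/\eps)^k$ bound, since the distance threshold in Lemma~\ref{lemma:distance-long} at level $i=k$ is $(3\Delta)^k W(x,y) = O(k/\eps)^k W(x,y)$).

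The main argument: apply Lemma~\ref{lemma:distance-long} with $i=0$. Either $d_H(x,y)\le \eta_0 d(x,y) = (1+\eps)d(x,y)$, and we are done (this is the $\beta=0$, purely multiplicative outcome), or $d_H(x,A_1)\le \zeta d(x,y)$. In the latter case, replace $x$ by its approximate pivot in $A_1$ and try again at level $1$, and so on. More carefully, I would prove by induction on $i$ that for every $x,y$ with $d(x,y)\ge (3\Delta)^i W(x,y)$, \emph{either} $d_H(x,y) \le \eta_i d(x,y) + (\text{small additive term growing with }i)$ \emph{or} $d_H(x,A_{i+1})$ is bounded by $\zeta d(x,y)$ times a factor that also grows controllably with $i$ — i.e.\ unwind the recursion in Lemma~\ref{lemma:distance-long} across all $k$ levels. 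The point is that the recursion can be invoked at most $k+1$ times before hitting $A_{k+1}=\emptyset$ (equivalently $A_k$, where the bunch is all of $A_k$ and Lemma~\ref{lemma:distance-long}'s second alternative is impossible), so after at most $k$ steps the first alternative must fire. Each step multiplies the ``radius'' by roughly $\zeta=O(1)$ and adds $O(\eps)$ to the multiplicative factor, and — crucially — the hypothesis $d(x,y)\ge (3\Delta)^i W(x,y)$ is exactly what is needed to keep re-entering the lemma as $i$ increases, since the relevant distance only grows. Summing the telescoping contributions gives a final bound of the form $d_H(x,y) \le (1+O(\eps))d(x,y) + O(k/\eps)^k W(x,y)$.

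For pairs with $d(x,y) < (3\Delta)^k W(x,y) = O(k/\eps)^k W(x,y)$, the induction above cannot be started at high levels, but here $d(x,y)$ itself is already within the desired additive budget, so I would use Lemma~\ref{lemma:distance-short} (with $m=\max\{d(x,A_i),d(y,A_i),d(x,y)\}$) to walk $x$ and $y$ up through the $A_i$ hierarchy in lock-step: either at some level we connect $x$ to $y$ directly with $d_H(x,y)\le 6m$, or both $d_H(x,A_{i+1})$ and $d_H(y,A_{i+1})$ stay bounded by $10m$ and we continue. Since $A_k\ne\emptyset$ (with constant probability, or deterministically if we just insist $A_k$ be nonempty), at level $k$ both $x$ and $y$ reach the common top set, and combining the two SLT paths with a direct bunch path (Lemma~\ref{lemma:distance-in-bunch}, valid because at the top level the bunch is all of $A_k$) closes the loop; the accumulated $m$ at that point is at most $O(1)^k$ times the original $\max\{d(x,y), d(x,A_0),\dots\}$, and since $d(x,A_0)=0$ for every $x$ (as $A_0=V$), everything is controlled by $O(1)^k \cdot d(x,y) \le O(k/\eps)^k W(x,y)$.

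The main obstacle I anticipate is bookkeeping the constants through the $k$-fold iteration: one must check that the per-level blow-up factor $\zeta$ and the additive increments $\frac{14}{\Delta-7}$ compound to exactly $O(k/\eps)^k$ and $1+O(\eps)$ respectively, and in particular that the choice $\Delta = \Theta(k/\eps)$ simultaneously makes $\eta_k = 1+O(\eps)$ and keeps the distance threshold $(3\Delta)^k$ at $O(k/\eps)^k$ — these two requirements pull in the same direction, which is why the argument works, but verifying it requires care. A secondary subtlety is the interface between the two regimes: one must make sure the ``far'' threshold $(3\Delta)^k W(x,y)$ used to split cases is consistent with where Lemma~\ref{lemma:distance-long}'s hypothesis $d(x,y)\ge(3\Delta)^iW(x,y)$ remains satisfiable as $i$ runs up to $k$, and that in the ``near'' regime the quantity $m$ in Lemma~\ref{lemma:distance-short} never exceeds the additive budget — both follow from $d(x,A_0)=0$ and $d(x,y)\le (3\Delta)^k W(x,y)$, but the reader should be walked through why.
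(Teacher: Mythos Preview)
Your overall strategy---choose $\Delta=\Theta(k/\eps)$, split into ``far'' ($d(x,y)\ge(3\Delta)^kW$) and ``near'' cases, and handle them via Lemma~\ref{lemma:distance-long} and Lemma~\ref{lemma:distance-short} respectively---is exactly the paper's approach, and your parameter arithmetic ($\eta_k=1+O(\eps)$, $(3\Delta)^k=O(k/\eps)^k$) is correct.

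The one place where you are making life harder than necessary is the far case. You propose to apply Lemma~\ref{lemma:distance-long} at $i=0$, then ``replace $x$ by its approximate pivot in $A_1$ and try again at level $1$'', or alternatively to re-prove an inductive statement across levels. Neither is needed (and the pivot-replacement idea is actually wrong as stated, since the pair and hence $W(\cdot,\cdot)$ would change). Lemma~\ref{lemma:distance-long} is already stated for \emph{every} $i$; the induction across levels lives inside its proof, not in how you invoke it. So in the far case you simply apply the lemma once with $i=k$: the hypothesis $d(x,y)\ge(3\Delta)^kW$ holds by assumption, and the second alternative is vacuous because $A_{k+1}=\emptyset$, forcing $d_H(x,y)\le\eta_k\,d(x,y)=(1+2\eps)d(x,y)$ immediately. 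The paper does exactly this.

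For the near case, your plan to iterate Lemma~\ref{lemma:distance-short} from level $0$ (using $d(x,A_0)=0$ so that initially $m=d(x,y)$) up to level $k$ is correct and in fact slightly simpler than what the paper does: the paper first applies Lemma~\ref{lemma:distance-long} at the largest $i$ with $(3\Delta)^iW\le d(x,y)$ and only then switches to iterating Lemma~\ref{lemma:distance-short} from level $i{+}1$. Both routes give an additive term of the form $(\text{const})^k\cdot(3\Delta)^kW=O(k/\eps)^kW$; the paper's detour through Lemma~\ref{lemma:distance-long} merely trims the base of the $(\text{const})^k$ factor.
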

\begin{proof}
We set $\Delta = 7 + \frac{14k}{\eps}$. Let $x,y \in V$ and $W = W(x,y)$. If $d(x,y) \ge (3\Delta)^k W$, then the first item of Lemma~\ref{lemma:distance-long} must hold, as $A_{k+1}$ is empty by definition.
\[ d_H(x,y) \le \eta_{k}d(x,y) \le \left(1 + \eps + \frac{14k}{14k/\eps}\right) = (1+2\eps) d(x,y)\]
Otherwise $d(x,y) < (3\Delta)^{k}W$. Take the integer $i \in \{0,1,\ldots,k-1\}$ such that $(3\Delta)^i W \le d(x,y) < (3\Delta)^{i+1}W$. Apply Lemma~\ref{lemma:distance-long} on $x$, $y$, and $i$. If the first statement holds, we obtain $d_H(x,y) \le (1 + 2\eps)d(x,y)$ as before. Otherwise the second statement holds, in which $d_H(x,A_{i+1}) \le \frac{3\Delta}{\Delta - 7}d(x,y) \le 4d(x,y)$, and similarly $d_H(y,A_{i+1}) \le \frac{3\Delta}{\Delta - 7}d(x,y) \le 4d(x,y)$. These together imply $m \le 4d(x,y)$, with $m$ as in Lemma~\ref{lemma:distance-short}. Set $j = i+1$ and apply Lemma~\ref{lemma:distance-short} with $x$, $y$, and $j$.

If the first statement of Lemma~\ref{lemma:distance-short} holds with $x$, $y$, and $j$, then $d_H(x,y) \le 6m \le 24d(x,y)$. Otherwise, the second statement of Lemma~\ref{lemma:distance-short} holds, and it follows that \[
\max\{d(x,A_{i+1}), d(y,A_{i+1}), d(x,y)\} \le 10m.
\]
Repeatedly increase $j$ by 1 until the first statement of Lemma~\ref{lemma:distance-short} holds. The first statement must hold when $j = k$, in which we have $d_H(x,y) \le 24 \cdot 10^{k-i-1}d(x,y) \le 24 \cdot 10^{k-i-1}(3\Delta)^{i+1}W$. This is maximized at $i=k-1$, thus $d_H(x,y) \le 24 \cdot (3\Delta)^k W$. We obtain additive stretch at most $24 \cdot (21 + \frac{42k}{\eps})^k W = O\left(\frac{k}{\eps}\right)^k W$ as required.
\end{proof}

\subsection{Lightness of the spanner} \label{subsection:lightness}

To prove the desired lightness, we show that with high probability, the total lightness of the edges added in each phase does not exceed $\Oish(\frac{ n^{1/k}}{\eps})$. The following lemma bounds the lightness of $H_0$, created in the first phase.

\begin{lemma}\label{lemma:light-h0}
$\light(H_0) = O(\frac{\log n}{\eps})$.
\end{lemma}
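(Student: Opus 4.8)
The plan is to bound $w(H_0)$ by grouping the shortest paths added in the first phase according to the net level $i$ of the source vertex, bounding the total weight contributed by level $i$ by $O(\frac1\eps)\cdot w(\MST(G))$ via \Cref{lem:MSTnet}, and then summing over the $O(\log n)$ levels.

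First I would fix $i\ge 0$ and look at the paths added on behalf of a vertex $v\in N_i\setminus N_{i+1}$. For each $j\in\{i+1,\dots,i+t\}$ with $t:=\lceil\log_2\frac1\eps\rceil$, the path added goes from $v$ to the closest vertex of $N_j$, and since $N_j$ is a $2^j$-net this path has weight $d(v,N_j)\le 2^j$. Summing the geometric series, a single $v\in N_i\setminus N_{i+1}$ accounts for at most $\sum_{j=i+1}^{i+t}2^j<2^{i+1}\cdot 2^t\le \frac{2^{i+2}}{\eps}$ of the weight of $H_0$ (using $2^t\le\frac2\eps$). Since there are at most $|N_i|$ such vertices, the paths charged to level $i$ contribute at most $|N_i|\cdot\frac{2^{i+2}}{\eps}=\frac4\eps\cdot\bigl(|N_i|\cdot 2^i\bigr)$. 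If $|N_i|\ge 2$, \Cref{lem:MSTnet} gives $|N_i|\cdot 2^i\le 2w(\MST(G))$, so level $i$ costs $O(\frac1\eps)w(\MST(G))$; if $|N_i|=1$ then, the nets being nonempty and nested, $N_{i+1}=N_i$ and $N_i\setminus N_{i+1}=\emptyset$, so level $i$ costs nothing. The level $i=-1$ must be treated separately since $N_{-1}=V$ need not be a net: here each of the at most $n$ vertices of $V\setminus N_0$ contributes at most $\sum_{j=0}^{t-1}2^j<2^t\le\frac2\eps$, for a total of at most $\frac{2n}{\eps}=\frac2\eps w(\MST(G))$ using $w(\MST(G))=n$. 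Finally, since $w(H_0)$ is at most the sum of the weights of all the added paths (an edge already present in $H_0$ is simply not re-added, so this overcounting only makes the bound weaker), summing over $i\in\{-1,0,\dots,\log_2 n\}$ yields $w(H_0)=O(\frac{\log n}{\eps})w(\MST(G))$, i.e. $\light(H_0)=O(\frac{\log n}{\eps})$.

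The arithmetic here is routine; the only real points of care are the hypothesis $|N|\ge 2$ in \Cref{lem:MSTnet}, which forces the small case distinction on $|N_i|$ and the hand treatment of level $i=-1$, and keeping track of which net is a $2^j$-net so that the covering bound $d(v,N_j)\le 2^j$ is applied with the correct net. The $\log n$ factor in $\light(H_0)$ is exactly the number of distance scales $2^i$, each of which costs $\Theta(1/\eps)$ times the MST weight.
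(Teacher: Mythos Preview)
Your proof is correct and follows essentially the same approach as the paper's: bound the per-vertex contribution at level $i$ by a geometric sum $O(\tfrac{1}{\eps})\cdot 2^i$, multiply by $|N_i|$, invoke \Cref{lem:MSTnet} to get $O(\tfrac{1}{\eps})w(\MST(G))$ per level, and sum over the $O(\log n)$ levels. You are in fact slightly more careful than the paper, which glosses over the $|N|\ge 2$ hypothesis of \Cref{lem:MSTnet} and treats the $i=-1$ level uniformly with the rest; your explicit handling of both cases is fine and does not change the argument.
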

\begin{proof}

Consider a fixed $i \in \{-1,0,\ldots,\log_2 n\}$ and recall that $N_{-1}=V$. Each vertex in $N_i \setminus N_{i+1}$ adds shortest paths of total length at most $2^{i+1} + 2^{i+2} + \ldots +  2^{i+\lceil\log(1/\eps)\rceil} = O(\frac{1}{\eps}) \cdot 2^i$. Recall that by Lemma~\ref{lem:MSTnet}, $|N|\Delta \le 2w(\MST(G))$ for any $\Delta$-net $N$. It follows that the total length of all shortest paths added corresponding to vertices in $N_i \setminus N_{i+1}$ is at most $O(\frac{1}{\eps}) w(\MST(G))$. We have $O(\log n)$ nets, so the lemma follows.
\end{proof}

In the second phase, we add a shortest path from each vertex to the representatives of its $\frac{1-\eps}{2}$-bunch. We start by bounding the total edge weight from all the paths to the representatives in the nets. Define for any $0\le j\le\log n$, $0\le i\le k-1$, and $x\in N_j$, the set $C_{ij}(x)=\{u\in A_i\setminus A_{i+1}~:~ \exists v\in B_{\frac{1-\eps}{2}}(u), x=r_j(v)\}$. This is the set of vertices $u\in A_i\setminus A_{i+1}$ that added a path to $x$ because this $x$ is a level-$j$ representative of some $v\in B_{\frac{1-\eps}{2}}(u)$.


\begin{lemma}\label{lemma:half-bunch-in-bunch}
Let $j \in \{0,\ldots,\log_2 n\}$, $x \in N_j$ and $0\le i\le k-1$. Then $\exists u\in C_{ij}(x)$ such that $C_{ij}(x) \subseteq B_1(u)$. 
\end{lemma}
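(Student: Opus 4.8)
The plan is to show that all vertices $u \in C_{ij}(x)$ are mutually close — specifically, that they all lie within a common ball whose radius is small compared to the distance from any such $u$ to its pivot $p_{i+1}(u)$ — so that picking the $u \in C_{ij}(x)$ that maximizes $d(u, p_{i+1}(u))$ (equivalently, has the largest $\frac{1-\eps}{2}$-bunch radius) yields a center whose $1$-bunch contains the whole set.

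First I would fix some $u \in C_{ij}(x)$. By definition of $C_{ij}(x)$, there is $v \in B_{\frac{1-\eps}{2}}(u)$ with $x = r_j(v)$, which means $d_{H_0}(v, x) \le (1+2\eps)2^j$ by Lemma~\ref{lemma:representative}, hence $d(v,x) \le (1+2\eps)2^j$. Also, since $u$ added its path to $x$ as a level-$j$ representative, the scale selection rule gives $2^j < \frac{\eps}{4} d(u,v)$, i.e. $d(u,v) > \frac{4}{\eps}2^j$; combined with $v \in B_{\frac{1-\eps}{2}}(u)$ we get a handle on $d(u, p_{i+1}(u))$, namely $d(u,v) < \frac{1-\eps}{2} d(u,p_{i+1}(u))$, so $d(u, p_{i+1}(u)) > \frac{2}{1-\eps}\cdot\frac{4}{\eps}2^j = \frac{8}{\eps(1-\eps)}2^j$. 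The key quantitative point is that $d(u, x)$ is tiny relative to this: $d(u,x) \le d(u,v) + d(v,x) \le \frac{1-\eps}{2}d(u,p_{i+1}(u)) + (1+2\eps)2^j$, and $2^j < \frac{\eps}{4}d(u,v) < \frac{\eps}{4}\cdot\frac{1-\eps}{2}d(u,p_{i+1}(u))$, so $d(u,x)$ is at most roughly $\frac{1-\eps}{2}(1 + O(\eps))\,d(u,p_{i+1}(u))$.

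Now take $u^\star \in C_{ij}(x)$ to be the vertex maximizing $d(u^\star, p_{i+1}(u^\star))$ — call this quantity $D^\star$. For any other $u \in C_{ij}(x)$, write $D_u = d(u, p_{i+1}(u)) \le D^\star$. From the previous paragraph, $d(u, x) \le \frac{1-\eps}{2}D_u + (1+2\eps)2^j_u$ where $2^j_u < \frac{\eps}{4}\cdot\frac{1-\eps}{2}D_u$, hence $d(u,x) < \frac{1-\eps}{2}D_u(1 + \frac{\eps(1+2\eps)}{4}) \le D^\star \cdot \frac{1-\eps}{2}(1 + \frac{\eps(1+2\eps)}{4})$, and the same bound holds for $d(u^\star, x)$. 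By the triangle inequality $d(u, u^\star) \le d(u,x) + d(x, u^\star) \le (1-\eps)D^\star(1 + \frac{\eps(1+2\eps)}{4}) < D^\star$ for $\eps < 1/10$ (since $(1-\eps)(1 + \frac{\eps(1+2\eps)}{4}) < 1$ in this range — this is the small computation to verify). Since $u, u^\star \in A_i$ (every element of $C_{ij}$ lies in $A_i \setminus A_{i+1} \subseteq A_i$) and $d(u, u^\star) < D^\star = d(u^\star, p_{i+1}(u^\star))$, we conclude $u \in B_1(u^\star)$. As $u$ was arbitrary in $C_{ij}(x)$, this gives $C_{ij}(x) \subseteq B_1(u^\star)$, as desired.

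The main obstacle is bookkeeping the constants: I need $2^j$ tied to $d(u,v)$ through the scale rule, $d(u,v)$ tied to $d(u, p_{i+1}(u))$ through membership in $B_{\frac{1-\eps}{2}}$, and $d(v,x)$ bounded through Lemma~\ref{lemma:representative}, then combine them so the factor multiplying $D^\star$ in the bound on $d(u,u^\star)$ comes out strictly below $1$ for $\eps < 1/10$. A secondary subtlety is the edge case $j < 0$: there $r_j(v) = v$, so $x = v$ and the argument degenerates, but then $d(u,x) = d(u,v) < \frac{1-\eps}{2}D_u$ directly and the same conclusion follows even more easily; I would dispatch that case in a sentence. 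One should also note the degenerate possibility that $C_{ij}(x)$ is empty, in which case the statement is vacuous, or a singleton, in which case $u = u^\star$ works trivially.
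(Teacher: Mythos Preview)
Your proof is correct and follows essentially the same route as the paper: bound each $d(u,x)$ via the triangle inequality using the representative bound $d(v,x)\le(1+2\eps)2^j$ and the scale rule $2^j<\frac{\eps}{4}d(u,v)$, then combine with $d(u,v)<\frac{1-\eps}{2}d(u,A_{i+1})$ so that two such bounds summed stay strictly below the pivot distance. The only difference is the choice of center: the paper picks the $u\in C_{ij}(x)$ \emph{farthest from $x$}, which immediately gives $d(u,x)+d(x,u')\le 2d(u,x)$ and a one-line constant check $2(1+\frac{\eps}{2})\cdot\frac{1-\eps}{2}<1$, whereas you pick the $u^\star$ maximizing the pivot distance $D^\star$ and then verify $(1-\eps)\big(1+\frac{\eps(1+2\eps)}{4}\big)<1$. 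Both choices work; the paper's is a bit slicker. (Minor note: your $j<0$ case is unnecessary here since the lemma fixes $j\in\{0,\dots,\log_2 n\}$, and your subscript $j_u$ should just be $j$, which is fixed throughout.)
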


\begin{figure}
    \centering
\tikzset{every picture/.style={line width=0.75pt}} 

\begin{tikzpicture}[x=0.75pt,y=0.75pt,yscale=-0.5,xscale=0.5]

\draw  [dash pattern={on 0.84pt off 2.51pt}] (212.1,159) .. controls (212.1,120.06) and (243.66,88.5) .. (282.6,88.5) .. controls (321.54,88.5) and (353.1,120.06) .. (353.1,159) .. controls (353.1,197.94) and (321.54,229.5) .. (282.6,229.5) .. controls (243.66,229.5) and (212.1,197.94) .. (212.1,159) -- cycle ;
\draw    (353.1,159) -- (282.1,159) ;
\draw [shift={(282.1,159)}, rotate = 180] [color={rgb, 255:red, 0; green, 0; blue, 0 }  ][fill={rgb, 255:red, 0; green, 0; blue, 0 }  ][line width=0.75]      (0, 0) circle [x radius= 2.34, y radius= 2.34]   ;
\draw [shift={(353.1,159)}, rotate = 180] [color={rgb, 255:red, 0; green, 0; blue, 0 }  ][fill={rgb, 255:red, 0; green, 0; blue, 0 }  ][line width=0.75]      (0, 0) circle [x radius= 2.34, y radius= 2.34]   ;
\draw   (125.28,159) .. controls (125.28,72.39) and (195.49,2.18) .. (282.1,2.18) .. controls (368.71,2.18) and (438.92,72.39) .. (438.92,159) .. controls (438.92,245.61) and (368.71,315.82) .. (282.1,315.82) .. controls (195.49,315.82) and (125.28,245.61) .. (125.28,159) -- cycle ;
\draw    (353.1,159) -- (395.2,122.6) ;
\draw [shift={(395.2,122.6)}, rotate = 319.15] [color={rgb, 255:red, 0; green, 0; blue, 0 }  ][fill={rgb, 255:red, 0; green, 0; blue, 0 }  ][line width=0.75]      (0, 0) circle [x radius= 2.34, y radius= 2.34]   ;
\draw [shift={(353.1,159)}, rotate = 319.15] [color={rgb, 255:red, 0; green, 0; blue, 0 }  ][fill={rgb, 255:red, 0; green, 0; blue, 0 }  ][line width=0.75]      (0, 0) circle [x radius= 2.34, y radius= 2.34]   ;
\draw  [dash pattern={on 0.84pt off 2.51pt}] (337.23,122.6) .. controls (337.23,90.58) and (363.18,64.63) .. (395.2,64.63) .. controls (427.22,64.63) and (453.18,90.58) .. (453.18,122.6) .. controls (453.18,154.62) and (427.22,180.58) .. (395.2,180.58) .. controls (363.18,180.58) and (337.23,154.62) .. (337.23,122.6) -- cycle ;
\draw    (353.1,159) -- (383.2,191.6) ;
\draw [shift={(383.2,191.6)}, rotate = 47.28] [color={rgb, 255:red, 0; green, 0; blue, 0 }  ][fill={rgb, 255:red, 0; green, 0; blue, 0 }  ][line width=0.75]      (0, 0) circle [x radius= 2.34, y radius= 2.34]   ;
\draw [shift={(353.1,159)}, rotate = 47.28] [color={rgb, 255:red, 0; green, 0; blue, 0 }  ][fill={rgb, 255:red, 0; green, 0; blue, 0 }  ][line width=0.75]      (0, 0) circle [x radius= 2.34, y radius= 2.34]   ;
\draw  [dash pattern={on 0.84pt off 2.51pt}] (344.21,191.6) .. controls (344.21,170.06) and (361.66,152.61) .. (383.2,152.61) .. controls (404.74,152.61) and (422.19,170.06) .. (422.19,191.6) .. controls (422.19,213.14) and (404.74,230.59) .. (383.2,230.59) .. controls (361.66,230.59) and (344.21,213.14) .. (344.21,191.6) -- cycle ;

\draw (353.1,162.4) node [anchor=north] [inner sep=0.75pt]    {$x$};
\draw (282.1,162.4) node [anchor=north] [inner sep=0.75pt]    {$u_{1}$};
\draw (200,270.3) node [anchor=north west][inner sep=0.75pt]    {$B_{1}( u_{1})$};
\draw (395.2,126) node [anchor=north] [inner sep=0.75pt]    {$u_{2}$};
\draw (383.2,195) node [anchor=north] [inner sep=0.75pt]    {$u_{3}$};

\end{tikzpicture}
\caption{Illustration of Lemma~\ref{lemma:half-bunch-in-bunch}. The dotted circles represent the $\frac{1-\eps}{2}$-bunches centered at $u_1$, $u_2$, and $u_3$ respectively. Lemma~\ref{lemma:half-bunch-in-bunch} states that all the $u_i$'s are contained in a 1-bunch of some vertex in $A_i$.}
\end{figure}
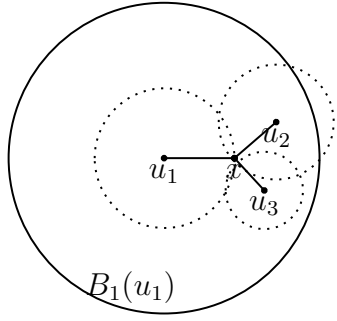

\begin{proof}
Let $u\in C_{ij}(x)$ be the vertex farthest from $x$.

 
As $x$ is a level $j$ representative of a vertex $v\in B_{\frac{1-\eps}{2}}(u)$, then by Lemma~\ref{lemma:representative} we have $d(v,x)\le d_H(v,x)\le (1+2\eps)2^j\le 2\cdot 2^j$. Recall that $j$ was chosen such that $2^j\le\frac{\eps}{4}d(u,v)$, hence
\begin{equation}\label{eq:xuv}
    d(u,x)\le d(u,v)+d(v,x)\le (1+\frac{\eps}{2})d(u,v)~.
\end{equation}

 It follows that for any $u'\in C_{ij}(x)$, 
 \begin{align*}
 d(u,u') &\le d(u,x)+d(x,u')\le 2d(u,x) \\
 &\stackrel{\eqref{eq:xuv}}{\le} 2(1+\frac{\eps}{2})d(u,v) \le 2(1+\frac{\eps}{2})(\frac{1-\eps}{2})d(u,A_{i+1}) < d(u,A_{i+1})~.
 \end{align*}
 Therefore by definition $u'\in B_1(u)$.
\end{proof}

Fix $j \in \{0,\ldots,\log_2 n\}$, $x \in N_j$, and $i \in \{0,\ldots,k-1\}$. We claim that each $u\in C_{ij}(x)$ adds a path to $x$ of weight $O(\frac{2^j}{\eps})$. To see this, let $v\in B_{\frac{1-\eps}{2}}(u)$ such that $x=r_j(v)$, then $2^j\ge\frac{\eps}{8}d(u,v)$, and by \eqref{eq:xuv} we have that $d(u,x)\le (1+\frac{\eps}{2})d(u,v)= O\left(\frac{2^j}{\eps}\right)$.

The following lemma was shown in \cite{TZ05}, we give a proof for completeness.
\begin{lemma}\label{lem:bunch-bound}
With high probability, for every $u\in V$, $|B_1(u)|\le O(n^{1/k}\ln n)$.
\end{lemma}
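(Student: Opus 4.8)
The plan is to show that for a fixed vertex $u$, the bunch $B_1(u)$ has expected size $O(n^{1/k})$ and concentrate using a standard argument. Recall that $B_1(u) = \{v \in A_i : d(u,v) < d(u, p_{i+1}(u))\}$ when $u \in A_i \setminus A_{i+1}$ (and $B_1(u) = A_k$ when $i = k$, which trivially has size at most $O(n^{1/k}\ln n)$ with high probability since $|A_k|$ is a sum of independent indicators with mean $n \cdot n^{-k/k} = 1$; actually more carefully $\mathbb{E}|A_k| = 1$ so this case is fine). So fix $u$ and condition on the event that $u \in A_i \setminus A_{i+1}$. First I would order the vertices of $A_i$ by distance from $u$ as $v_1, v_2, \ldots$ (breaking ties consistently, e.g.\ by index), so that $d(u,v_1) \le d(u,v_2) \le \cdots$. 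The key observation is that $v_t \in B_1(u)$ exactly when none of $v_1, \ldots, v_t$ — or rather, the relevant condition — forces $p_{i+1}(u)$ to be within distance $d(u,v_t)$; more precisely, $v_t \in B_1(u)$ implies that none of $v_1, \ldots, v_{t}$ lies in $A_{i+1}$, because if some $v_s \in A_{i+1}$ with $s \le t$ then $d(u, p_{i+1}(u)) \le d(u, v_s) \le d(u, v_t)$, contradicting $v_t \in B_1(u)$ (with care about whether the inequality in the bunch definition is strict — here $d(u,v_t) < d(u,p_{i+1}(u))$ is required, and $d(u,v_s) \le d(u,v_t)$ gives the contradiction).

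Next, conditioning on $u \in A_i \setminus A_{i+1}$ only tells us $u$ was not promoted; it does not affect the promotion coins of the other vertices of $A_i$. Each $v_s \in A_i \setminus \{u\}$ is independently placed in $A_{i+1}$ with probability $n^{-1/k}$. Therefore $\Pr[v_t \in B_1(u) \mid u \in A_i \setminus A_{i+1}] \le \Pr[v_1, \ldots, v_t \notin A_{i+1}] \le (1 - n^{-1/k})^{t-1}$ (using $t-1$ rather than $t$ to be safe about whether $u$ itself is among the $v_s$'s — actually $u \in A_i$ so $u$ is one of the $v_s$, but $u \notin A_{i+1}$ by conditioning, so we get at least $t-1$ genuinely independent coins, each equal to ``not promoted'' with probability $1 - n^{-1/k}$). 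Summing over $t$, $\mathbb{E}[|B_1(u)| \mid u \in A_i \setminus A_{i+1}] \le \sum_{t \ge 1} (1-n^{-1/k})^{t-1} = n^{1/k}$. Since this bound holds conditioned on each of the events $\{u \in A_i \setminus A_{i+1}\}$ for $i = 0, \ldots, k-1$, and also trivially for $i = k$, we get $\mathbb{E}[|B_1(u)|] \le n^{1/k}$ unconditionally.

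For the high-probability bound I would use a tail argument rather than a generic concentration inequality, since the indicators $\mathbf{1}[v_t \in B_1(u)]$ are not independent. The cleanest route: $|B_1(u)| \ge \lambda$ means $v_\lambda \in B_1(u)$ (since membership is ``downward closed'' in the distance ordering — if $v_t \in B_1(u)$ and $s < t$ then $d(u,v_s) \le d(u,v_t) < d(u,p_{i+1}(u))$ so $v_s \in B_1(u)$), which as shown above requires $v_1, \ldots, v_\lambda \notin A_{i+1}$, an event of probability at most $(1 - n^{-1/k})^{\lambda - 1} \le e^{-(\lambda-1)/n^{1/k}}$. Setting $\lambda = c \, n^{1/k} \ln n + 1$ for a suitable constant $c$ makes this at most $n^{-c}$. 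Taking a union bound over all $n$ vertices $u$ (and, if one wants to be fully careful, over the conditioning on which layer $i$ contains $u$, which costs only a factor $k+1 \le n$), the claim follows with high probability. The main obstacle is purely a matter of bookkeeping: getting the conditioning right so that the promotion coins of $v_1, \ldots, v_\lambda$ are genuinely independent of the event we conditioned on, and handling the off-by-one in the exponent (whether $u$ is counted among the $v_s$) and the boundary case $i = k$. None of these is deep, but they must be stated precisely; I expect the write-up to mirror the Thorup–Zwick bunch-size analysis essentially verbatim.
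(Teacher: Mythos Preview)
Your proposal is correct and follows essentially the same approach as the paper's own proof: order the vertices of $A_i$ by distance from $u$, observe that $|B_1(u)| \ge \lambda$ forces the first $\lambda$ of them to miss $A_{i+1}$, bound this probability by $(1-n^{-1/k})^{\lambda-1}$, set $\lambda \approx c\,n^{1/k}\ln n$, and union-bound over $u$ (with the $i=k$ case handled separately via a Chernoff bound on $|A_k|$). Your expected-size computation is extra but harmless, and your attention to the conditioning and the off-by-one from $u$ itself being among the $v_s$ is slightly more careful than the paper's write-up, but the substance is identical.
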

\begin{proof}
Let $0\le i\le k$ be such that $u\in A_i\setminus A_{i+1}$. Suppose first that $i\le k-1$. Then if we arrange all vertices in $A_i$ by non-decreasing order of distance from $u$ in $G$, each vertex in $A_i$ is promoted to $A_{i+1}$ independently with probability $n^{-1/k}$. The event that $|B_1(u)|\ge 2n^{1/k}\ln n$ is contained in the event that none of the first $2n^{1/k}\ln n$ vertices in this ordering is promoted. By independence, the probability of the latter event is
\[
(1-n^{-1/k})^{2n^{1/k}\ln n}\le e^{-2\ln n}=\frac{1}{n^2}~,
\]
so by taking a union bound over the $n$ vertices, with probability at least $1-\frac1n$ all bunches for $i<k$ are of size at most $O(n^{1/k}\ln n)$. Finally, note that for $u\in A_k$ we have $B_1(u)=A_k$. As each vertex is included in $A_k$ with probability $\frac1n$ independently, its expected size is 1, and by a simple Chernoff bound with high probability $|A_k|\le O(\ln n)$.
\end{proof}

Assume, as we may, that this high probability event did occur.
Thus for any representative $x\in N_j$, by Lemma~\ref{lem:bunch-bound}, there are only $O(n^{1/k}\log n)$ paths to $x$ from different $u\in C_{ij}(x)$ for each $0\le i\le k-1$. Furthermore, each of these paths has length $O(\frac{2^j}{\eps})$. 
We conclude that the total weight of paths connecting vertices in $V\setminus A_K$ to representatives in $N_j$ is $O\left(k\cdot |N_j|\cdot n^{1/k}\log n\cdot \frac{2^j}{\eps}\right)$. By Lemma~\ref{lem:MSTnet} we have that $|N_j|\cdot 2^j\le w(MST)$, so the lightness of these paths is $O\left(\frac{kn^{1/k}\log n}{\eps}\right)$. As we have only $O(\log n)$ different choices for $0\le j\le \log n$ we 
get lightness $O\left(\frac{kn^{1/k}}{\eps} \cdot \log^2n\right) = \Oish(\frac{n^{1/k}}{\eps})$.

It remains to bound the lightness from the paths between vertices in $A_k$. By our assumption $|A_k|\le O(\log n)$,  so there can be at most $O(\log^2n)$ vertex pairs in $A_k$. Each such pair may add one path to the spanner of total weight at most $w(MST(G))$, so the lightness of these paths is bounded by $O(\log^2n)$.

In the second phase we also add direct shortest paths between $u-v$ for any $v\in B_{\frac{1-\eps}{2}}(u)$ such that the unique integer $j$ satisfying 
$\frac{\eps}{8} \cdot d(u,v) \le 2^j < \frac{\eps}{4} \cdot d(u,v)$ is negative (that is, $v$ is its own representative). In this case, we have $d(u,v)=O(\frac{1}{\eps})$, and by Lemma~\ref{lem:bunch-bound}, the total number of such $u-v$ pairs is $O(n^{1+1/k}\log n)$ (once again, assuming the high probability event of the lemma). We conclude that the total lightness of the direct paths is $\Oish\left(\frac{n^{1/k}}{\eps}\right)$.


Finally, in the third phase we add the edges of $k$ $(1+\eps, O\left(\frac{1}{\eps}\right))$-SLTs to $H$.  Thus, the SLTs have $O(\frac{k}{\eps}) = O(\frac{\log n}{\eps})$ lightness.

Altogether, $H$ has lightness $O(\frac{\log n}{\eps})$ from the first and third phases of the construction (Lemma~\ref{lemma:light-h0}), and with high probability $(\frac{kn^{1/k}}{\eps} \cdot \log^2n) = \Oish(\frac{n^{1/k}}{\eps})$ lightness from the second phase. Therefore the lightness of $H$ is $\Oish(\frac{ n^{1/k}}{\eps})$.

\subsection{Size of the spanner}\label{subsection:size}
To show the size bound, we must first show that the expected number of edges added at each phase does not exceed $O(kn^{1 + 3/k})$. In the first phase, we add $O(n \log n)$ edges from the shortest paths to the net points; this follows as, for each $j \in \{0,\ldots,\log_2 n\}$, the set of edges added when connecting vertices to the net $N_j$ are a subset of the edges in a shortest path forest rooted at $N_j$. Since there are $O(\log n)$ nets, we obtain $O(n \log n)$ edges. Note that $n\log n\le kn^{1+3/k}$ for any value of $k\ge 1$.

We now bound the number of edges added in the second phase. The next lemma is an extension of a lemma in~\cite{elkin2019almost}:
\begin{lemma} \label{lemma:paths-intersect}
    Let $0 \le i \le k$. Let $u, x \in A_i$, and let $v$ (resp. $y$) be a representative of some vertex in $B_{\frac{1-\eps}{2}}(u)$ (resp. $B_{\frac{1-\eps}{2}}(x)$) for which a shortest $u$-$v$ path $P_{u,v}$ (resp. $x$-$y$ path $P_{x,y}$) is added to $H$. If $P_{u,v} \cap P_{x,y} \neq \emptyset$, then all four points are in $B_1(u)$, or all four points are in $B_1(x)$.
\end{lemma}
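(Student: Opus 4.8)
The plan is a short triangle-inequality chase, pivoted at the common vertex of the two paths, after first charging the cost of the detour to a representative; this parallels the corresponding lemma of~\cite{elkin2019almost}, where the added paths ran straight to bunch members. I would restrict to $i<k$: for $i=k$ one has $A_{k+1}=\emptyset$ and $B_{\frac{1-\eps}{2}}(u)=B_1(u)=A_k$, and that case is handled separately in the size bound, where $|A_k|=O(\log n)$ with high probability. First I would unpack the hypothesis. There is a vertex $v'\in B_{\frac{1-\eps}{2}}(u)$ with $v=r_j(v')$ for the level $j$ satisfying $\frac{\eps}{8}d(u,v')\le 2^j<\frac{\eps}{4}d(u,v')$ (if $j<0$ then $v=v'$ and every estimate below only improves), and likewise $y'\in B_{\frac{1-\eps}{2}}(x)$ with $y=r_{j'}(y')$. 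From Lemma~\ref{lemma:representative} and $H_0\subseteq H$ I get $d(v',v)\le d_H(v',v)\le(1+2\eps)2^j<\frac{\eps}{2}d(u,v')$ (using $\eps<1/10$), hence $d(u,v)\le(1+\frac{\eps}{2})d(u,v')$, and symmetrically $d(x,y)\le(1+\frac{\eps}{2})d(x,y')$. Fixing a common vertex $z\in P_{u,v}\cap P_{x,y}$ and using that $P_{u,v}$ and $P_{x,y}$ are shortest paths, I also have the monotonicity bounds $d(z,u),d(z,v)\le d(u,v)$ and $d(x,z),d(z,y)\le d(x,y)$.

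Then comes the heart of the argument: bound $d(x,\cdot)$ for all four points by routing \emph{through $z$} (pivoting at a center $u$ or $x$ instead loses a factor $\approx 3/2$ and fails). Assume first $d(u,v')\le d(x,y')$ — otherwise swap the two triples and the identical computation applies with $x$ replaced by $u$. Then I would derive, via the triangle inequality through $z$,
\begin{align*}
d(x,u)&\le d(x,z)+d(z,u)\le (1+\tfrac{\eps}{2})d(x,y')+(1+\tfrac{\eps}{2})d(u,v')\le 2(1+\tfrac{\eps}{2})\,d(x,y'),\\
d(x,v)&\le d(x,z)+d(z,v)\le 2(1+\tfrac{\eps}{2})\,d(x,y'),\\
d(x,y)&\le (1+\tfrac{\eps}{2})\,d(x,y'),\qquad d(x,x)=0.
\end{align*}
Since $y'\in B_{\frac{1-\eps}{2}}(x)$ forces $d(x,y')<\frac{1-\eps}{2}d(x,p_{i+1}(x))=\frac{1-\eps}{2}d(x,A_{i+1})$, each right-hand side is strictly below $2(1+\frac{\eps}{2})\cdot\frac{1-\eps}{2}\,d(x,A_{i+1})=(1+\frac{\eps}{2})(1-\eps)\,d(x,A_{i+1})<d(x,A_{i+1})$. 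Hence all four points $u,v,x,y$ lie in $B_1(x)$; in the opposite case $d(x,y')<d(u,v')$ they all lie in $B_1(u)$, which is the claim.

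I expect the obstacles here to be purely bookkeeping rather than any real difficulty. First, one must notice that the correct yardstick is $d(x,A_{i+1})$ (equivalently, the radius of $B_1(x)$) and that the triangle inequalities must be pivoted at the shared vertex $z$, not at a center. Second, one should read ``all four points are in $B_1(u)$'' in the right way: the representatives $v,y$ are net points and need not belong to $A_i$, so this means ``all four points lie within distance $d(u,A_{i+1})$ of $u$'' — which is exactly what the calculation above delivers, and exactly what the subsequent edge-counting argument needs, since it lets us conclude that any two added paths sharing a vertex have all their endpoints inside a single bunch-ball, capping how many distinct such paths can pass through any fixed vertex.
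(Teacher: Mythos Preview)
Your proof is correct and follows essentially the same route as the paper: pick a common vertex $z$, compare the two path lengths (you compare $d(u,v')$ and $d(x,y')$, the paper compares the path lengths $d(u,v)$ and $d(x,y)$, which is equivalent up to the $(1+\frac{\eps}{2})$ factor), and bound all four pairwise distances by $2(1+\frac{\eps}{2})\cdot\frac{1-\eps}{2}\,d(\cdot,A_{i+1})<d(\cdot,A_{i+1})$ via the triangle inequality through $z$ combined with \eqref{eq:xuv}. Your remark that ``in $B_1(u)$'' must be read as ``within distance $d(u,A_{i+1})$ of $u$'' (since the representatives $v,y$ need not lie in $A_i$) is a fair clarification of the paper's phrasing.
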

\begin{proof}
    Let $z\in V$ be a vertex in the intersection of $P_{u,v}$ and $P_{x,y}$. Assume first that $P_{u,v}$ is not shorter than $P_{x,y}$, and we show all points are in $B_1(u)$ (the other case is symmetric, in which all points are in $B_1(x)$). We have
    \begin{dmath*}    
    d(u,x) \le d(u,z) + d(z,x) \le 2d(u,v) \stackrel{\eqref{eq:xuv}}{\le} 2(1+\eps)\left(\frac{1-\eps}{2}\right)d(u,A_{i+1}) < d(u,A_{i+1}).
    \end{dmath*} 
Since $d(z,y)\le d(u,v)$ as well, the same bound holds for $d(u,y)$.
\end{proof}
For fixed $0 \le i \le k-1$, consider the graph $G_i$ containing all shortest paths $P_{u,v}$ where $u \in A_i$ and $v$ is the representative of some vertex in $B_{\frac{1-\eps}{2}}$ that $u$ connects to. The number of edges in $G_i$ is at most $O(n + C_i)$, where $C_i$ is the number of pairwise intersections between these paths.

\begin{lemma}
    $\mathbb{E}(|C_i|) = O(n^{(k+3-i)/k})$.
\end{lemma}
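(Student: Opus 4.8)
The plan is to bound $\mathbb{E}(|C_i|)$ by controlling, for each ordered pair of paths $(P_{u,v}, P_{x,y})$ added in phase 2 with $u,x \in A_i$, the probability that they intersect, and then summing. By \Cref{lemma:paths-intersect}, if $P_{u,v}$ and $P_{x,y}$ intersect and $P_{u,v}$ is the longer path, then all four endpoints $u,v,x,y$ lie in $B_1(u)$. So I would first fix the ``anchor'' vertex $u$ and count intersections charged to it: the number of such intersecting pairs charged to $u$ is at most the number of ways to choose $x$ and $y$ inside $B_1(u)$, together with the choice of $v$ inside $B_1(u)$ — but more carefully, for a fixed $u$, the path $P_{u,v}$ ranges over the at most $|B_1(u)|$ representatives it connects to (one per vertex $v' \in B_{\frac{1-\eps}{2}}(u)$, and representatives of distinct vertices may coincide, so at most $|B_1(u)|$ choices), and $P_{x,y}$ is determined by $x \in A_i \cap B_1(u)$ and the representative $y$ it connects to (again at most $|B_1(u)|$ choices of $y$ once $x$ is fixed, but $x$ ranges over $A_i \cap B_1(u)$). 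So the count of intersecting ordered pairs charged to $u$ is $O(|B_1(u)| \cdot |A_i \cap B_1(u)| \cdot |B_1(u)|)$... but this overcounts badly; the cleaner route is to charge each intersecting pair once to whichever of $u,x$ has the longer path, and bound $|C_i| \le \sum_{u \in A_i} (\text{number of pairs of phase-2 paths, one from } u, \text{ one from some } x \in A_i, \text{ all four endpoints in } B_1(u))$.

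Concretely, I would write $|C_i| = O\big(\sum_{u \in A_i} |B_1(u)| \cdot |A_i \cap B_1(u)| \cdot \max_{x} (\#\text{reps that } x \text{ connects to})\big)$, and since each vertex connects to at most one representative per vertex in its $\frac{1-\eps}{2}$-bunch, and $B_{\frac{1-\eps}{2}}(x) \subseteq B_1(x)$, the number of paths out of any single vertex is at most $|B_1(x)| = O(n^{1/k} \log n)$ with high probability by \Cref{lem:bunch-bound}. Thus deterministically on the high-probability event, $|C_i| = O\big(n^{1/k}\log n \cdot \sum_{u \in A_i} |B_1(u)| \cdot |A_i \cap B_1(u)|\big) = O\big((n^{1/k}\log n)^2 \cdot |A_i| \cdot \max_u |A_i \cap B_1(u)|\big)$. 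Now I take expectations over the sampling that defines $A_i$: $\mathbb{E}(|A_i|) = n^{1-i/k} = n^{(k-i)/k}$, and $|A_i \cap B_1(u)| \le |B_1(u)| = O(n^{1/k}\log n)$. Combining, $\mathbb{E}(|C_i|) = \Oish\big(n^{1/k} \cdot n^{1/k} \cdot n^{(k-i)/k} \cdot n^{1/k}\big) = \Oish(n^{(k+3-i)/k})$, which is the claimed bound (the $\polylog$ factors being absorbed — though I should double-check the statement wants exactly $O(n^{(k+3-i)/k})$ and whether they intend $\Oish$; if it is literally $O(\cdot)$ with no polylog, then one of the $|B_1(u)|$ or $|A_i \cap B_1(u)|$ bounds must be tightened, e.g. using that $B_1(u)$ for $u \in A_i$ contains in expectation only $n^{1/k}$ vertices of $A_i$, or by a more careful accounting that saves the log factors).

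The subtle accounting point — and the main obstacle — is getting the combinatorics of ``how many ordered intersecting pairs are charged to $u$'' right without double counting or undercounting, and in particular handling the fact that $v$ and $y$ are \emph{representatives} (net points), not arbitrary vertices, so a single net point $x \in N_j$ may be the common representative of many vertices in $B_1(u)$. The safe move is to count by the \emph{source} vertices: an intersecting pair is specified by $(u, v', x, x')$ where $v' \in B_{\frac{1-\eps}{2}}(u)$, $x \in A_i$, $x' \in B_{\frac{1-\eps}{2}}(x)$, the paths $P_{u, r(u,v')}$ and $P_{x, r(x,x')}$ intersect, and — by \Cref{lemma:paths-intersect} applied with the roles of the endpoints — $u, x \in B_1(u)$ (taking $u$ to host the longer path), and moreover $v', x'$ are in (or near) $B_1(u)$. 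So the number of choices is: $u$ free, $v' \in B_1(u)$ (the $\frac{1-\eps}{2}$-bunch is inside the $1$-bunch, giving $O(n^{1/k}\log n)$ choices), $x \in A_i \cap B_1(u)$, and $x' \in B_1(x)$ (again $O(n^{1/k}\log n)$ choices). That is $O(|B_1(u)| \cdot |A_i \cap B_1(u)| \cdot n^{1/k}\log n)$ per $u$, and summing over $u \in A_i$ and taking expectations as above yields $\mathbb{E}(|C_i|) = \Oish(n^{(k+3-i)/k})$. I would also need a brief remark that the high-probability event of \Cref{lem:bunch-bound} fails with probability $\le 1/n$, contributing at most $O(1)$ to the expectation since $|C_i| \le n^2$ trivially; this keeps the statement in expectation clean.
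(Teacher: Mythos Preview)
Your approach follows the same charging scheme as the paper—use \Cref{lemma:paths-intersect} to attribute each intersecting pair to one anchor $u \in A_i$, then count triples $(v,x,y)$ all lying in $B_1(u)$—but as written it only delivers $\Oish(n^{(k+3-i)/k})$, not the $O(\cdot)$ the lemma states. The paper avoids the log factors by computing the third moment directly: conditional on $u \in A_i \setminus A_{i+1}$, the random variable $|B_1(u)|$ is (stochastically dominated by) a geometric variable with parameter $n^{-1/k}$, so
\[
\mathbb{E}\big[|B_1(u)|^3\big] \;\le\; n^{-1/k}\sum_{j \ge 1} j(j+1)(j+2)\,(1-n^{-1/k})^{j-1} \;=\; O(n^{3/k}),
\]
and then linearity of expectation together with $\mathbb{E}[|A_i|] = n^{1-i/k}$ gives $\mathbb{E}[|C_i|] \le \mathbb{E}\big[\sum_{u \in A_i}|B_1(u)|^3\big] = O(n^{(k+3-i)/k})$. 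This moment computation is precisely the ``more careful accounting that saves the log factors'' you flagged; routing through the high-probability bound of \Cref{lem:bunch-bound} unavoidably leaves a $\log^3 n$ residue.

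A secondary gap: your handling of the failure event is not quite right. The crude deterministic bound on $|C_i|$ is not $n^2$—the number of phase-2 paths is $\sum_{u \in A_i}|B_{\frac{1-\eps}{2}}(u)|$, which can be of order $n^2$, so $|C_i|$ can a priori be $O(n^4)$—and even granting $|C_i| \le n^2$, the contribution $n^2 \cdot n^{-1} = n$ is $O(n)$, not $O(1)$, and exceeds $n^{(k+3-i)/k}$ once $i > 3$. One could patch this by boosting the failure probability of \Cref{lem:bunch-bound} to $n^{-c}$ for large enough $c$, but the direct moment argument bypasses the issue entirely.
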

\begin{proof}
    By Lemma~\ref{lemma:paths-intersect}, for every intersecting pair of paths $P_{u,v}$ and $P_{x,y}$ in $G_i$, we have that all four points $u$, $v$, $x$, and $y$ belong to the same 1-bunch. Each $u \in A_i$ introduces at most $|B_1(u)|^3$ pairwise intersecting paths (since $u$ is fixed and $v, x, y \in B_1(u)$). Note that $|B_1(u)|$ is a random variable distributed geometrically with parameter $n^{-1/k}$, so
    \begin{align*}
        \mathbb{E}(|B_1(u)|^3) &= \sum_{j=1}^{\infty} j^3 n^{-1/k}(1 - n^{-1/k})^{j-1} = n^{-1/k} \sum_{j=1}^{\infty} j^3 (1 - n^{-1/k})^{j-1} \\
        &\le n^{-1/k} \sum_{j=1}^{\infty} (1 - n^{-1/k})^{j-1} j(j+1)(j+2) \le \frac{6}{n^{-3/k}} = O(n^{3/k}).
    \end{align*}
    We note that the expected size of $A_i$ is $n^{1-i/k}$, since every vertex joins $A_i$ with probability $n^{-i/k}$. The expected number of intersections at level $i$, and therefore the expected size of $G_i$, is
    \[ C_i = \mathbb{E}\left[ \sum_{u \in A_i} |B_1(u)|^3 \right] = O(n^{3/k}) \cdot n^{1-i/k} = O(n^{(k+3-i)/k}).\]
\end{proof}

For $i=k$, the expected size of $\binom{|A_k|}{2}$ is constant, so the expected number of edges added between vertices in $A_k$ is $O(n)$. Lastly, the $k$ SLTs add $O(kn)$ edges. The expected size of $H$ is $O(n \log n + n^{1+3/k} + kn) = O(kn^{1 + 3/k})$.

\subsection{Proof of Theorem~\ref{thm:spanner-near-additive}}
By the analysis of Section~\ref{subsection:stretch}, $H$ is a $(1+O(\eps), O(\frac{k}{\eps})^k W(\cdot,\cdot))$-spanner. Sections~\ref{subsection:lightness} and~\ref{subsection:size} showed that with high probability the lightness is $\Oish\left(\frac{n^{1/k}}{\eps}\right)$, and expected size of $H$ is $O(kn^{1+3/k})$.
Reducing the multiplicative stretch to $1+\eps$ can be done by scaling $\eps$ by an appropriate constant. So by Markov's inequality, with constant probability we get a spanner $H$ satisfying Theorem~\ref{thm:spanner-near-additive}.


\bibliographystyle{splncs04}
\bibliography{ref}

\appendix

\section{Proof of \Cref{lemma:distance-long}}\label{sec:lemma:distance-long:proof}

The proof is by induction on $i$. Let $W = W(x,y)$. For the base case $i=0$, if $y \in B_{\frac{1-\eps}{2}}(x)$, then $d_H(x,y) \le (1+\eps)d(x,y)$ by Lemma~\ref{lemma:distance-in-bunch}. This satisfies the first statement as $\eta_0=1+\eps$.

Otherwise, if $y \not\in B_{\frac{1-\eps}{2}}(x)$ and $x \in A_1$, then $d_H(x,A_1) = 0$ which satisfies the second statement. Lastly, if $x \in A_0 \setminus A_1$ and $y \not\in B_{\frac{1-\eps}{2}}(x)$, then $\frac{1-\eps}{2}d(x,p_1(x)) \le d(x,y)$. Using the SLT rooted at $s_1$, we have $d_H(x,A_1) \le (1+\eps)d(x,p_1(x)) \le \frac{2(1+\eps)}{1-\eps}d(x,y)$. As $\frac{2(1+\eps)}{(1-\eps)} < 3 < \zeta$, it follows that $d_H(x,A_1) \le \zeta d(x,y)$, so the second statement holds. Now assume Lemma~\ref{lemma:distance-long} holds for some $i \ge 0$ and we wish to prove for $i+1$. Let $x,y \in V$ such that $d(x,y) \ge (3\Delta)^{i+1}W$.

Divide the shortest path $P_{x,y}$ into $J$ segments $L_j = [u_j, u_{j+1}]_{j \in [J]}$ of length at least $(3\Delta)^iW$ and at most $d(x,y)/\Delta$ using the following method: let $u_1 = x$, and for $j \ge 1$, let $u_{j+1}$ be the first vertex on $P_{x,y}$ such that the length of $u_ju_{j+1}$ is at least $(3\Delta)^i W$. If there is no such vertex $u_{j+1}$, set $y = u_{j+1}$ and $|J|=j$. Combine the last two segments to create one segment. Each segment has length at most $(3\Delta)^i W + W = ((3\Delta)^i + 1)W$, except for the last segment, which has length at most $(3\Delta)^iW + W + (3\Delta)^i W \le 3^{i+1}\Delta^i W \le d(x,y)/\Delta$. If the first statement of Lemma~\ref{lemma:distance-long} holds for all segments in $L_j$ with parameter $i$, then the first statement holds for the pair $(x,y)$ with parameter $i+1$:

\begin{align}\label{eq:pooo}
d_H(x,y) &\le \sum_{j \in [J]} d_H(u_j, u_{j+1}) \le \eta_i \sum_{j \in [J]} d(u_j, u_{j+1}) = \eta_i d(x,y) \le \eta_{i+1} d(x,y).
\end{align}

Otherwise, let $L_l = [u_l,u_{l+1}]$ and $L_{r-1}=[u_{r-1},u_r]$ be the leftmost and rightmost segments for which the first statement is false and the second statement is true; note these may be the same segment. That is,
\begin{align}\label{eq:secco}
d_H(u_l, A_{i+1}) &\le \zeta d(u_l, u_{l+1}) \le \frac{\zeta}{\Delta}d(x,y) \\
d_H(u_r, A_{i+1}) &\le \zeta d(u_{r-1}, u_r) \le \frac{\zeta}{\Delta}d(x,y)\nonumber
\end{align}
Recall that $p_i'(u)$ is the ``approximate'' pivot $u$ connects to, namely, the vertex in $A_i$ that the SLT rooted at $s_i$ connects $u$ to, such that $d_H(u,p_i'(u)) \le (1+\eps)d(u,p_i(u))$. Consider first the case that $p_{i+1}'(u_r) \in B_{\frac{1-\eps}{2}}(p_{i+1}'(u_l))$, 
by Lemma~\ref{lemma:distance-in-bunch}:
\begin{align}
d_H(p_{i+1}'(u_l), p_{i+1}'(u_r)) &\le (1+\eps)d(p_{i+1}'(u_l), p_{i+1}'(u_r)) \tag*{}\\
&\le (1+\eps)[d(p_{i+1}'(u_l), u_l) + d(u_l,u_r) + d(u_r, p_{i+1}'(u_r))] \label{eq:distance-between-pivots}
\end{align}
We can bound the distance between $u_l$ and $u_r$ in $H$ by
\begin{dmath*}
d_H(u_l,u_r) \le  d_H(u_l, p_{i+1}'(u_l)) + d_H(u_r,p_{i+1}'(u_r)) + d_H(p_{i+1}'(u_l), p_{i+1}'(u_r)) \\
\stackrel{\eqref{eq:distance-between-pivots}}{\le}  d_H(u_l, p_{i+1}'(u_l)) + d_H(u_r,p_{i+1}'(u_r)) + (1+\eps)[d(p_{i+1}'(u_l), u_l) + d(u_l,u_r) + d(u_r, p_{i+1}'(u_r))] \\
\le  (1+\eps)d(u_l,u_r) + (2+\eps)[d_H(u_l,p_{i+1}'(u_l)) + d_H(u_r,p_{i+1}'(u_r))].
\end{dmath*}
By the SLT and \eqref{eq:secco} we have $d_H(u_l, p_{i+1}'(u_l)) \le (1+\eps)d_H(u_l, A_{i+1}) \le \frac{(1+\eps)\zeta}{\Delta} d(x,y)$. Similarly, $d_H(u_r,p_{i+1}'(u_r)) \le \frac{(1+\eps)\zeta}{\Delta} d(x,y)$. This implies that
\begin{align}\label{eq:ttt}
d_H(u_l,u_r) &\le (1+\eps)d(u_l,u_r) + \frac{(2+\eps)(2+2\eps)\zeta}{\Delta}d(x,y) \nonumber\\
&\le \eta_i d(u_l,u_r) + \frac{4.62\zeta}{\Delta}d(x,y) \nonumber \\
&\le \eta_i d(u_l,u_r) + \frac{14}{\Delta-7}d(x,y).
\end{align}

By the same calculation as in \eqref{eq:pooo}, we have that $d_H(x,u_l)\le\eta_id(x,u_l)$ and $d_H(u_r,y)\le\eta_id(u_r,y)$.
We can finally bound $d_H(x,y)$ as follows.

\begin{align*}
d_H(x,y) &\le d_H(x,u_l) + d_H(u_l,u_r) + d_H(u_r,y) \\
&\stackrel{\eqref{eq:ttt}}{\le} \eta_i d(x,u_l) + \left[\eta_i d(u_l,u_r) + \frac{14}{\Delta-7}d(x,y)\right] + \eta_i d(u_r,y) \\
&\le \left(\eta_i + \frac{14}{\Delta-7}\right)d(x,y) \\
&= \eta_{i+1}d(x,y).
\end{align*}
The last step follows as $\eta_i$ was chosen to satisfy $\eta_i + \frac{14}{\Delta-7} = \eta_{i+1}$.

The second case is that $p_{i+1}'(u_r) \not\in B_{\frac{1-\eps}{2}}(p_{i+1}'(u_l))$. Then by definition of bunch,
\begin{equation}\label{eq:not-in-bunch}
d(p_{i+1}'(u_l), p_{i+2}(p_{i+1}'(u_l))) \le \frac{2}{1-\eps}d(p_{i+1}'(u_l), p_{i+1}'(u_r))
\end{equation}
Note that because there is an SLT path in $H$ from $x$ to some vertex in $A_{i+2}$, we have $d_H(x,A_{i+2}) \le (1+\eps)d(x,A_{i+2})$. We upper bound $d_H(x,A_{i+2})$ as follows:
\begin{dmath*}
d_H(x,A_{i+2}) \le  (1+\eps)d(x,A_{i+2}) \\
\le  (1+\eps)\left[d(x,u_l) + d(u_l, p_{i+1}'(u_l)) + d(p_{i+1}'(u_l), p_{i+2}(p_{i+1}'(u_l)))\right] \\
\stackrel{\eqref{eq:not-in-bunch}}{\le}  (1+\eps)[d(x,u_l) + d(u_l, p_{i+1}'(u_l)) + \frac{2}{1-\eps}d(p_{i+1}'(u_l), p_{i+1}'(u_r))] \\
\le  (1+\eps)\left[d(x,u_l) + d(u_l, p_{i+1}'(u_l)) +  \frac{2}{1-\eps}\left(d(p_{i+1}'(u_l),u_l) + d(u_l,u_r) + d(u_r,p_{i+1}'(u_r)) \right)\right] \\
=  (1+\eps)\left[d(x,u_l) + \frac{2}{1-\eps}d(u_l,u_r)\right] + (1+\eps)\left[d(u_l,p_{i+1}'(u_l)) + \frac{2}{1-\eps}\left(d(p_{i+1}'(u_l), u_l) + d(u_r, p_{i+1}'(u_r))\right)\right] \\
\le  \frac{2(1+\eps)}{1-\eps}d(x,y) + (1+\eps)\left(1 + \frac{4}{1-\eps}\right)\max\{d(u_l,p_{i+1}'(u_l)),d(u_r,p_{i+1}'(u_r))\} \\
\le  \frac{2(1+\eps)}{1-\eps}d(x,y) + (1+\eps)^2\left(1 + \frac{4}{1-\eps}\right)\max\{d(u_l,A_{i+1}),d(u_r,A_{i+1})\} \\
\stackrel{\eqref{eq:secco}}{\le}  \left(\frac{2(1+\eps)}{1-\eps} + (1+\eps)^2\left(1 + \frac{4}{1-\eps}\right)\frac{\zeta}{\Delta}\right) d(x,y) \\
\stackrel{(\eps < \frac{1}{10})}{\le}  \left(3 + \frac{7\zeta}{\Delta}\right)d(x,y) \\
=  \zeta d(x,y).
\end{dmath*}
The last equality follows as $\zeta$ was chosen to satisfy the equation $3 + \frac{7\zeta}{\Delta} = \zeta$.

\section{A sparse and lightweight \texorpdfstring{$(1+\eps, 2(1+\eps)W_{max})$}~-spanner}
In this section we prove the following theorem.
\begin{theorem} \label{thm:spanner-1-2}
Let $G$ be a weighted graph such that $W_{max} \ge \sqrt{w(\MST(G))}$, and let $\eps > 0$. Then $G$ has a $(1+\eps, 2(1+\eps)W_{max})$-spanner of size $O(n^{3/2})$ and lightness $O(\frac{n^{1/2}}{\eps})$.
\end{theorem}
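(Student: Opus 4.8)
The idea is to reuse the three-phase machinery from the main construction but in a drastically simplified form, exploiting the hypothesis $W_{max}\ge\sqrt{w(\MST(G))}$. Under our normalization $w(\MST(G))=n$, this means $W_{max}\ge\sqrt n$, so the additive budget $2(1+\eps)W_{max}$ is already $\Omega(\sqrt n)$; thus we only need to handle vertex pairs at distance roughly up to $\sqrt n$ "for free," and a single level of the hierarchy (i.e., taking $k=2$, so one sampled set $A_1$ with sampling probability $n^{-1/2}$) should suffice. First I would sample $A=A_1\subseteq V$, each vertex independently with probability $n^{-1/2}$, so $|A|=O(\sqrt n\log n)$ w.h.p. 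Then, as in the third phase, add a $(1+\eps,O(1/\eps))$-SLT rooted at a virtual vertex $s$ joined to all of $A$; this forest has $O(n)$ edges and lightness $O(1/\eps)$, and guarantees $d_H(v,A)\le(1+\eps)d(v,A)$ for every $v$.

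The second ingredient is the "bunch" step. For each $v\in V\setminus A$, add a shortest path from $v$ to every $u$ with $d(v,u)<d(v,p(v))$, where $p(v)$ is $v$'s nearest vertex in $A$ (its pivot); and for $v\in A$, do nothing extra. This is exactly the Thorup–Zwick bunch at the top level. Each bunch has size $O(\sqrt n\log n)$ w.h.p. by the same geometric-random-variable argument as in Lemma~\ref{lem:bunch-bound}, so the spanner gets $O(n^{3/2}\log n)$ edges from this step — to hit the clean $O(n^{3/2})$ bound I would instead bound the \emph{expected} size via the path-intersection argument of Lemma~\ref{lemma:paths-intersect}/the $\mathbb{E}[|B_1(u)|^3]$ computation with $k=2$, giving $O(n^{1+3/2}\cdot n^{-1/2})=O(n^{3/2})$ expected edges, then invoke Markov. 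For lightness: each bunch path from $v$ has length $<d(v,p(v))\le d(v,A)$; since the balls of radius $\tfrac12 d(v,A)$ around distinct net-ish points are \emph{not} quite disjoint here, the cleaner route is to observe $d(v,A)\le W_{max}$ is \emph{false} in general — instead I would simply not need a strong lightness bound on long bunch paths, because the stretch analysis only needs bunch paths up to a bounded radius. The correct normalization is: a vertex $v$ needs a bunch path only when $d(v,u)$ is below $v$'s pivot distance, and the total weight is $\sum_{v}|B(v)|\cdot d(v,A)$; using $\mathbb E[|B(v)|]=O(\sqrt n)$ and $\sum_v d(v,A)\le n\cdot\max_v d(v,A)\le n\cdot O(w(\MST))/1$... this is too crude. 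The right bound follows from the net–MST relation (Lemma~\ref{lem:MSTnet}): partition $V$ by distance scale to $A$; at scale $2^i$, the relevant vertices are covered by a $2^i$-packing argument giving weight $O(w(\MST)\cdot\sqrt n)$ per scale, hence lightness $O(\sqrt n\log n/\eps)$ — but since we're only asked for $O(n^{1/2}/\eps)$, I would instead restrict bunch paths to length $\le W_{max}$ and argue any longer pair is served by the additive term directly. Concretely, only add a $v$–$u$ bunch path if additionally $d(v,u)\le W_{max}$; then total bunch-path weight is $\le\sum_v |B(v)|\cdot W_{max}=O(n\cdot\sqrt n\cdot\sqrt n)=O(n^2)$... still off. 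So the genuinely correct bookkeeping must go through Lemma~\ref{lem:MSTnet} scale-by-scale exactly as in Section~\ref{subsection:lightness}, yielding $\Oish(n^{1/2}/\eps)$, and then one checks it is in fact $O(n^{1/2}/\eps)$ because with $k=2$ there is only $O(\log n)$ scales and the $\log$ factors can be absorbed — or one accepts $O(n^{1/2}/\eps)$ after noting the dominant scale contributes $O(n^{1/2}/\eps)$ without a $\log$.

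For the \textbf{stretch}: fix $x,y$. If $y$ is in $x$'s bunch (and $d(x,y)\le W_{max}$, if we imposed that cap), we added a near-shortest path directly, so $d_H(x,y)\le(1+\eps)d(x,y)$. Otherwise $d(x,p(x))\le d(x,y)$, and symmetrically we can route $x\to p'(x)\in A$ via the SLT in $(1+\eps)d(x,A)\le(1+\eps)d(x,y)$; but $p'(x)$ and $p'(y)$ both lie in $A$, and — crucially — \emph{every} pair in $A$ is joined: either directly (the $i=k$ bunch is all of $A_k$, so in our $k=2$ setting every pair of $A$-vertices gets a shortest path, contributing $\binom{|A|}{2}=O(n\log^2 n)$ edges and lightness $O(\log^2 n)$ since each such path costs $\le w(\MST)=n$... that's lightness $O(n\log^2 n)$, far too much). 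So instead of connecting all $A$-pairs, I connect $A$ internally by a second-level structure; but with $W_{max}\ge\sqrt n\ge\mathrm{diam}$-ish this is where the hypothesis bites: $d(p'(x),p'(y))\le d(x,p'(x))+d(x,y)+d(y,p'(y))\le(3+O(\eps))d(x,y)$, and more importantly if $d(x,y)$ is large (say $\ge W_{max}$) then the additive term $2(1+\eps)W_{max}$ already dominates — so for the "far" regime I only need $d_H(x,y)\le(1+\eps)d(x,y)+2(1+\eps)W_{max}$, which I get by routing $x\to p'(x)$, then $p'(x)$ to $p'(y)$ \emph{through the MST} (add the MST, lightness $+1$, or note the SLT forest plus MST has lightness $O(1/\eps)$) giving $d_H(p'(x),p'(y))\le d(x,A)+d(y,A)+$ (path in MST of length $\le$ diam); hmm, diam can exceed $W_{max}$. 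The clean statement: $d(p'(x),p'(y))\le d(x,y)+2W_{max}$ only if $d(x,A),d(y,A)\le W_{max}$, which holds because $d(x,A)\le d(x,p(x))\le d(x,y)$ is not bounded by $W_{max}$ — unless $d(x,y)\le W_{max}$. I conclude the argument must split exactly at $d(x,y)\lessgtr W_{max}$: for $d(x,y)\le W_{max}$, bunch membership + SLT gives multiplicative-only stretch as in Lemma~\ref{lemma:distance-short} with $k=2$; for $d(x,y)>W_{max}$, the whole path can be broken into $\le d(x,y)/W_{max}+1$ hops each of length $\le W_{max}$ handled by the first case, accumulating multiplicative error $(1+\eps)$ per hop multiplicatively-over-a-sum (so still $(1+\eps)d(x,y)$) plus one "$+2W_{max}$" slack at the ends for the SLT detours, total additive $2(1+\eps)W_{max}$.

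\textbf{Main obstacle.} The delicate point — just as in the main theorem — is the stretch analysis: combining the $(1+\eps)$ SLT detours to approximate pivots with the bunch-routing across possibly many distance scales without the multiplicative error compounding, and controlling exactly which pairs fall back on the $2(1+\eps)W_{max}$ additive term. I expect this to essentially be the $k=2$ specialization of Lemmas~\ref{lemma:distance-long} and~\ref{lemma:distance-short} (with $W_{max}$ in place of $W(\cdot,\cdot)$ throughout, and $\Delta$ a suitable constant), so the plan is to state and prove those two lemmas in the restricted setting, then assemble them as in Lemma~\ref{lemma:distance-spanner}; the lightness and size follow from Lemmas~\ref{lem:MSTnet} and~\ref{lem:bunch-bound} and the $\mathbb E[|B_1(u)|^3]$ computation, all specialized to $k=2$.
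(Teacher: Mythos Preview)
Your proposal vastly overcomplicates matters and, as written, does not close: the lightness bookkeeping never converges (you cycle through three attempts, each of which you correctly diagnose as ``too crude'' or ``off''), your size computation ``$O(n^{1+3/2}\cdot n^{-1/2})=O(n^{3/2})$'' is arithmetically wrong (it equals $O(n^{2})$, and in any case the expected-intersection bound at level $i=0$ with sampling probability $n^{-1/2}$ is $n\cdot\mathbb E[|B_1(u)|^3]=O(n^{5/2})$), and the stretch you would obtain from the $k=2$ specialization of Lemmas~\ref{lemma:distance-long}--\ref{lemma:distance-spanner} is $(1+\eps,\,O(1/\eps)^{2}W_{max})$, not the sharp $(1+\eps,\,2(1+\eps)W_{max})$ the theorem asserts.

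The paper's proof is a three-line argument that bypasses bunches, sampling, and the hierarchy entirely. Normalize so $w(\MST)=n$, hence $W_{max}\ge\sqrt n$. Take a single $\sqrt n$-net $N$; by Lemma~\ref{lem:MSTnet} it has $|N|=O(\sqrt n)$ points. For each net point $v\in N$ add a $(1+\eps,O(1/\eps))$-SLT rooted at $v$. That is the whole spanner: $O(\sqrt n)$ spanning trees give size $O(n^{3/2})$ and lightness $O(\sqrt n/\eps)$, both deterministically. For stretch, any $x$ has some $v\in N$ with $d(x,v)\le\sqrt n\le W_{max}$, so routing $x\to v\to y$ inside $v$'s SLT gives
\[
d_H(x,y)\le(1+\eps)\bigl(d(x,v)+d(v,y)\bigr)\le(1+\eps)\bigl(2\sqrt n+d(x,y)\bigr)\le(1+\eps)d(x,y)+2(1+\eps)W_{max}.
\]
The idea you missed is that the hypothesis $W_{max}\ge\sqrt{w(\MST)}$ makes the additive budget large enough that \emph{every} vertex is within $W_{max}$ of a net point at the right scale, so one SLT detour through that net point already achieves the target stretch; no bunches, no pivots, no second level are needed.
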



\begin{proof}
We use the following simple construction using Definitions~~\ref{def:slt} and~\ref{def:delta-net}. Assume w.l.o.g. $w(\MST(G)) = n$, so that $W_{max} \ge \sqrt{n}$. Let $N$ be a $\sqrt{n}$-net. For each $v \in N$, add a $(1+\eps, O(\frac{1}{\eps}))$-SLT rooted at $v$; let $H$ be the resulting graph. Then for each vertex pair $(x,y) \in V \times V$, there is a vertex $v \in N$ such that $d(x,v) \le \sqrt{n} \le W_{max}$. The distance $d_H(x,y)$ between $x$ and $y$ is at most
\begin{align*}
d_H(x,y) &\le d_H(x,v) + d_H(v,y) \\
&\le (1+\eps)(d(x,v) + d(v,y)) \\
&\le (1+\eps)(\sqrt{n} + (\sqrt{n} + d(x,y))) \\
&= (1+\eps)d(x,y) + 2(1+\eps)W_{max}
\end{align*}
By Lemma~\ref{lem:MSTnet} we have  $|N|\sqrt{n} \le 2w(\MST(G))$, it follows that $|N| = O(\sqrt{n})$, and for each $v \in V$, we add an SLT of weight $O(\frac{1}{\eps}) \cdot w(\MST(G))$. We conclude that the lightness of $H$ is $O(\sqrt{n} \cdot \frac{1}{\eps})$. Moreover, the size of the spanner is $O(n^{3/2})$ since $H$ consists of $O(\sqrt{n})$ spanning trees.
\end{proof}

\end{document}